\def \PP{\mathbb P}
\def \RR{\mathbb R}
\def \UU{\mathbb U}
\def \cD{\bm \Theta}
\def \cP{\mathcal D}
\def \cU{\mathcal U}
\def \cV{\mathcal V}
\def \cO{\mathcal O}
\def \cB{\mathcal B}
\def \FH{Fr\'echet-Hoeffding }
\def \Q{G^{-1}}
\def \Qa{\Q(\alpha)}
\def \Qastar{{G ^{-1}}^\star(\alpha)}
\def \Qt{G_{\btheta}^{-1}}
\def \Qta{\Qt(\alpha)}
\def \Qth{\widehat G_{\btheta}^{-1}}
\def \Qtha{\Qth(\alpha)}
\def \bX{\mathbf X}
\def \bB{\mathbf B}
\def \bx{\textbf x}
\def \bu{\mathbf u}
\def \btheta{\bm \theta}
\def \bbeta{\bm \beta}
\def \bTheta{\bm \Theta}
\def \Ptheta {P_{\btheta}}
\def \Ctheta{C_{\btheta}}
\def \ctheta{c_{\btheta}}
\def \Ftheta{F_{\btheta}}
\def \ftheta{f_{\btheta}}
\def \Gtheta{G_{\btheta}}
\def \gtheta{g_{\btheta}}
\def \Qtheta{G_{\btheta}^{-1}}
\def \Gthetan{\widehat G_{\btheta}}
\def \htheta{\hat \btheta}
\def \hthetaN{\widehat \btheta_{N}}
\def \Gthetah{G_{\btheta + h}}
\def \fthetah{f_{\btheta + h}}
\def \Fthetah{F_{\btheta + h}}
\newcommand{\argmin}{\operatornamewithlimits{argmin}}
\def\NAT@def@citea{\def\@citea{\NAT@separator}}
\newtheorem{mydef}{Definition}
\newtheorem{theor}{Theorem}
\newtheorem{prop}{Proposition}
\newtheorem{assumption}{Assumption}
\renewcommand \theassumption{$\mathbf{\Alph{assumption}}$}
\let \origtheassumption \theassumption
\renewcommand{\algocf@Vline}[1]{
  \strut\par\nointerlineskip
  \algocf@push{\skiprule}
  \hbox{\bgroup\color{cyan}\vrule\egroup%
    \vtop{\algocf@push{\skiptext}
      \vtop{\algocf@addskiptotal #1}\bgroup\color{cyan}\Hlne\egroup}}\vskip\skiphlne
  \algocf@pop{\skiprule}
  \nointerlineskip}
\renewcommand{\algocf@Vsline}[1]{
  \strut\par\nointerlineskip
  \algocf@bblockcode%
  \algocf@push{\skiprule}
  \hbox{\bgroup\color{cyan}\vrule\egroup
    \vtop{\algocf@push{\skiptext}
      \vtop{\algocf@addskiptotal #1}}}
  \algocf@pop{\skiprule}
  \algocf@eblockcode%
}
\title{Detecting and modeling worst-case dependence structures between random inputs of computational reliability models}
\author[1, 4]{\small Nazih Benoumechiara}
\author[2]{\small Bertrand Michel}
\author[3]{\small Philippe Saint-Pierre}
\author[1, 5]{\small Nicolas Bousquet}
\affil[1]{\footnotesize Sorbonne Universit\'e, Laboratoire de Probabilit\'es, Statistique et Mod\'elisation, 75005 Paris, France}
\affil[2]{\footnotesize LMJL-UMR6629, \'Ecole Centrale de Nantes, Nantes, France}
\affil[3]{\footnotesize Institut de Mathématiques de Toulouse, UMR 5219 Université de Toulouse UPS IMT, F-31062 Toulouse Cedex 9, France}
\affil[4]{\footnotesize EDF R\&D, Chatou, France}
\affil[4]{\footnotesize Quantmetry, Paris, France}
\date{}
\begin{document}

\maketitle

\begin{abstract}
Uncertain information on input parameters of reliability models is usually modeled by considering these parameters as random, and described by marginal distributions and a dependence structure of these variables. In numerous real-world applications, while information is mainly provided by marginal distributions, typically from samples, little is really known on the dependence structure itself. Faced with this problem of incomplete or missing information, risk studies are often conducted by considering independence of input variables, at the risk of including irrelevant situations. This approach is especially used when reliability functions are considered as black-box computational models. Such analyses remain weakened in absence of in-depth model exploration, at the possible price of a strong risk misestimation. Considering the frequent case where the reliability output is a quantile, this article provides a methodology to improve risk assessment, by exploring a set of pessimistic dependencies using a copula-based strategy. In dimension greater than two, a greedy algorithm is provided to build input regular vine copulas reaching a minimum quantile to which a reliability admissible limit value can be compared, by selecting pairwise components of sensitive influence on the result. The strategy is tested over toy models and a real industrial case-study. The results highlight that current approaches can provide non-conservative results, and that a nontrivial dependence structure can be exhibited to define a worst-case scenario.
\end{abstract}

\section{Introduction}

Many industrial companies, like energy producers or vehicle and aircraft manufacturers, have to ensure a high level of safety for their facilities or products. In each case, the structural reliability of certain so-called critical components plays an essential role in overall safety. For reasons related to the fact that these critical components are highly reliable, and that real robustness tests can be very expensive or even hardly feasible, structural reliability studies generally use simulation tools \cite{de2008uncertainty, lemaire2010}. The physical phenomenon of interest being reproduced by a numerical model $\eta$ (roughly speaking, a {\it computer code}), such studies are based on the calculation of a reliability indicator based on the comparison of $y=\eta(\bx)$ and a safety margin, where $\bx$ corresponds to a set of input parameters influencing the risk. In the framework of this article, such models are considered as black box and can be explored only by simulation means.

While the problems of checking the validity of $\eta$ and selecting inputs $\bx \in \chi \subseteq \RR^d$ are addressed by an increasing methodological corpus \cite{Bayarri2007, NRC2012}, a perennial issue is the modeling of $\bx$. Differing from the specification of $\eta$ itself, this input vector is  known with uncertainty,  either because the number of experiments to estimate is limited, or because some inputs reflect intrinsically variable phenomena \cite{NIL03}. In most cases, these epistemic and aleatory uncertainties are jointly modeled by probability distributions \cite{Helton2011}. Consecutively, the reliability indicator is often defined as the probability that $y$ be lower than a threshold ({\it failure probability}), or a limit quantile for $y$. This article focuses on this last indicator, which provides an upper or lower bound of the mean effect of the output variable uncertainty.

Therefore the modeling of $\bx$ stands on the assessment of a joint probability distribution with support $\chi$, divided between marginal and dependencies features. Though information on each dimension of $\bx$ can often be accessible experimentally or using physical or expert knowledge \cite{bedford2006}, the dependence structure between the component of $\bx$ remains generally unknown. Typically, statistical data are only available per dimension, but not available for two or more dimensions simultaneously. For this reason, most of robustness studies are conducted by sampling within independent marginal distributions. Doing so, reliability engineers try to capture input situations that minimize the reliability indicator. Such situations are defined as so-called {\it worst cases}. However, the assumption of independence between inputs has been severely criticized since the works by \cite{Grigoriu79} and \cite{Thoft82}, who showed that output failure probabilities of industrial systems can significantly vary and be underestimated if the input dependencies are neglected. More generally, \cite{Tang2013impactcopula, tang2015copula} showed that tail dependencies between inputs can have major expected effects on the uncertainty analysis results.

Returning to a probabilist framework, and beyond structural reliability, the problem of defining a worst-case scenario by selecting a joint input distribution, from incomplete information, is a topical issue encountered in many fields. In decision-making problems, \cite{scarf1958min} proposed a general definition of the worst case distribution as the minimizer of an excepted cost among a set of possible distributions. More recently, \cite{Agrawal12} extended this approach to account for incomplete dependence information. These theoretical works, that propose selection rules over the infinite set of all possible joint distributions, remain hard to apply in practice. Recent applied works made use of copulas \cite{Nelsen07} to model dependencies between stochastic inputs \cite{Tang2013impactcopula, tang2015copula}, following other researchers confronted to similar problems in various fields: finance \cite{Cherubini2004copula}, structural safety \cite{goda2010statistical}, environmental sciences \cite{schoelzel2008multivariate} or medicine \cite{beaudoin2008archimedean}.  These studies mainly consider bivariate copulas, which makes theses analysis effective only when two random variables are correlated. Cases where a greater number of variables is involved were explored by \cite{jiang2015vine}, who used \textit{vine copulas} to approach complex multidimensional correlation problems in structural reliability. A vine copula is a graphical representation of the \textit{pair-copula construction} (PCC), proposed by \cite{joe1996families}, which defines a multidimensional dependence structure using conditional bivariate copulas. Various class of vines exist (see \cite{czado2010pair} for a review), and among them the regular vines (R-vines) introduced by \cite{bedford2001probability, bedford2002vines} are known for their appealing computational properties, while inference on PCC is usually demanding \cite{dissmann2013selecting,Haff2016}. 

R-vine parametric copulas seem promising to improve the search for a worst-case dependence between stochastic inputs, while keeping the benefits of a small number of parameters, as favoring inference and conducting simple sensitivity analyses \textit{a posteriori}. To our knowledge, however, no practical methodology has been yet proposed to this end for which the notion of worst case is defined by the minimization of an output quantile. This is the subject of this article. More precisely, the aim of this research is to determine a parametric copula over $\bx$, close to the worst case dependence structure, which is associated to a minimum value of the quantile of the distribution of $y$. Given a vine structure defined by a parameter vector,  the  optimization problem involves to conduct empirical quantile estimations for each value of this vector in a finite set of interest (chosen as a grid). The proposed methodology stands on an encompassing greedy algorithm exploring copula structures, which integrates several sub-algorithms of increasing complexity and is based on some simplifying assumptions. These algorithms are made available in the Python library \texttt{dep-impact} \cite{depimpact}.

The article is therefore organized as follows. Section \ref{sec:consistency} introduces the framework and studies the consistency of a statistical estimation of the minimum quantile, given an input copula family and a growing sequence of grids. A preliminary study of the influence of the dependence structure, specific to quantile minimization, is conducted in Section \ref{sec:methodology} as a first application of this statistical optimization. The wider problem of selecting copulas in high-dimensional settings using a sequence of quantile minimization is considered in Section \ref{sec:quantile_minimization}. While the choice of R-vines is defended, a sparsity hypothesis is made to diminish the computational burden, according to which only a limited number of pairwise dependencies is influent on the result. A greedy algorithm is proposed to carry out the complete procedure of optimization and modeling. This heuristic is tested in Section \ref{sec:applications} over toy examples, using simulation, and a real industrial case-study. The results highlight that worst-case scenarios produced by this algorithm are often bivariate copulas reaching the \FH bounds \cite{Hoeffding40, Frechet51} (describing perfect dependence between variables), as it could be expected in monotonic frameworks, but that other nontrivial copulas can be exhibited in alternative situations. Results and avenues for future research are extensively discussed in the last section of this article. We also refer to Appendix \ref{sec:proof_consitency} and \ref{sec:vine_copulas} for supplementary material on consistency proofs, on R-vine copulas and on R-vine iterative construction.
\section{Minimization of the quantile of the output distribution}
\label{sec:consistency}

This section introduces a general framework for the calculation of the minimum quantile of the output distribution of a computational model, when the input distribution can be taken from a large family of distributions, each one corresponding to a particular choice of dependencies between the input variables.

\subsection{A general framework for the computation of the minimum quantile}
\label{subsec:2:worst_case_dependence}

To be general, let us consider a computer code which takes a vector $\bx \in \chi \subseteq \RR^d$ as an input and produces a real quantity $y$ in output. This code is represented by a deterministic function $\eta: \RR^d \rightarrow \RR$ such that $\eta(\bx) = y$. The sets $\RR$ and $\RR^d$ are endowed with their Borel sigma algebras and we assume that $\eta$ is measurable. The general expression of the function $\eta$ is unknown but for some vector $\bx \in \RR^d$ it is assumed that the quantity $\eta(\bx)$ can always be computed. In particular, the derivatives of $\eta$, when they exist, are never assumed to be known. Let $P_1, \dots, P_d$ be a fixed family of $d$ distributions, all supported on $\RR$. We introduce the set $\cP(P_1, \dots, P_d)$ of all multivariate distributions $P$ on $\RR^d$ such that the marginal distributions of $P$ are all equal to the $(P_j)_{j=1\dots d}$. Henceforth, we use the shorter notation $\cP$ for $\cP( P_1, \dots, P_d)$. 

For some $P \in \cP$, let $G$ be the cumulative distribution function of the model output. In other terms $dG$ is the push-forward measure of $P$ by $\eta$. For $\alpha \in (0,1)$, let $G^{-1}$ be the $\alpha$-quantile of the output distribution: 
\begin{equation}
	G ^{-1} (\alpha) := \inf \{y \in \RR: G (y) \geq \alpha \}.
	\label{eq:2:quantile_function}
\end{equation}
For the rest of this document, we denote as \textit{output quantile} the $\alpha$-quantile of the output distribution. 

In many real situations, the function $\eta$ corresponds to a known physical phenomenon. The input variables $\bx$ of the model are subject to uncertainties and are quantified by the distribution $P$. The propagation of these uncertainties leads to the calculation of the output quantile, which defines an overall risk. Due to the difficulties to gather information, it is common to have this distribution incompletely defined and only known through its marginal distributions. Therefore, the set $\cP$ corresponds to all the possible distributions that are only known through their marginal distributions $(P_j)_{j=1\dots d}$. In a reliability study, it is essential to avoid underestimating the risk. In such a situation, we might consider a more pessimistic computation of the quantile. We define as the \textit{worst quantile}, the minimum value of the quantile by considering all the possible input distributions $P \in \cP$. This conservative approach consists in minimizing $G^{-1}(\alpha)$ over the family $\cP$ such as 
\begin{equation} 
	\Qastar :=  \min_{P \in \cP} \Qa.
	\label{eq:2:general_problem}
\end{equation}
Since the function $\eta$ has no closed form in general, it is not possible to give a simple expression of $G^{-1}(\alpha)$ in function of the distribution $P$, and consequently the minimum $\Qastar$ does not have a simple expression too. In this paper we propose to study a simpler problem than \eqref{eq:2:general_problem}, by minimizing $\Qa$ over a subset of $\cP$. This subset is a family of distributions $(\Ptheta)_{\btheta \in \cD}$ associated to a parametric family of copula $(\Ctheta)_{\btheta \in \cD}$, where $\bTheta$ is a compact set of $\RR^p$ and $p$ is the number of copula parameters.

\subsection{Copula-based approach}

We introduce the real-values random vector $\bX = (X_1, \dots, X_d) \in \RR^d$ associated to the distribution $\Ptheta$. Each component $X_j$, for $j=1, \dots, d$, is a real-value random variable with distribution $P_j$. A copula describes the dependence structure between a group of random variables. Formally, a copula is a multidimensional continuous cumulative distribution function (CDF) linking the margins of $\bX$ to its joint distribution. Sklar's Theorem \cite{Sklar59} states that every joint distribution $\Ftheta$ associated to the measure $\Ptheta$ can be written as
\begin{equation}
	\Ftheta (\bx) = \Ctheta \left(F_1 (x_1), \dots, F_d (x_d) \right),
	\label{eq:2:copula_function}
\end{equation}
with some appropriate $d$-dimensional copula $\Ctheta$ with parameter $\btheta \in \bTheta$ and the marginal CDF's $F_j(x_j) = \PP [X_j \leq x_j]$. If all marginal distributions are continuous functions, then there exists an unique copula satisfying
$$
	\Ctheta (u_1, \dots, u_d) = \Ftheta ( F_1^{-1}(u_1), \dots, F_d^{-1}(u_d))
$$
where $u_j = F_j(x_j)$. For $\Ftheta$ absolutely continuous with strictly increasing marginal distributions, one can derive \eqref{eq:2:copula_function} to obtain the joint density of $\bX$:
\begin{equation}
	\ftheta (\bx) = \ctheta \left(F_1 (x_1), \dots, F_d (x_d) \right) \prod_{j=1}^d f_j(x_j),
	\label{eq:2:copula_density}
\end{equation}
where $\ctheta$ denotes the copula density function of $\Ctheta$ and $f_j(x_j)$ are the marginal densities of $\bX$. Numerous parametric copula families are available and are based on different dependence structures. Most of these families have bidimensional dependencies, but some can be extended to higher dimensions. However, these extensions have a lack of flexibility and cannot describe all types of dependencies \cite{Nelsen07}. To overcome these difficulties, tools like vine copulas \cite{Joe94} (described in Section \ref{sec:quantile_minimization}) combine bivariate copulas, from different families, to create a multidimensional copula.

Let $\Gtheta$ and $\Gtheta^{-1}$ be respectively the CDF and quantile function of the push-forward distribution of $\Ptheta$ by $\eta$ (see  Figure \ref{fig:2:push_forward}). For a given parametric family of copula $(\Ctheta)_{\btheta \in \bTheta}$ and a given $\alpha\in (0, 1)$, the minimum output quantile for a given copula is defined by 
\begin{equation}
	{G ^{-1}_C}^\star  (\alpha) := \inf_{\btheta \in \bTheta} \Gtheta ^{-1} (\alpha)
	\label{eq:2:quantile}
\end{equation}
and if it exists, we consider a minimum
\begin{equation}
	\btheta_{C}^* \in  \argmin_{\btheta \in \bTheta} \Gtheta ^{-1} (\alpha).
	\label{eq:2:minimization_parametric}
\end{equation}
We call this quantity the minimum quantile parameter or worst dependence structure.
\begin{figure}
	\centering
	\includegraphics[ width=0.50\textwidth]{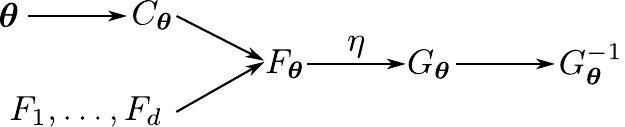}
	\caption{Illustration of the link between the dependence parameter $\btheta$ and the quantile function $\Gtheta^{-1}$. The joint CDF $\Ftheta$ is obtained using \eqref{eq:2:copula_function} from a copula $\Ctheta$ and marginal CDF's $(F_j)_{j=1}^{d}$. The push-forward of $\Ftheta$ through the model $\eta$ leads to the CDF $\Gtheta$ and quantile function $\Qtheta$ of the output distribution.}
	\label{fig:2:push_forward}
\end{figure}
	
Note that there is no reason for $\Gtheta ^{-1} (\alpha)$ to be a convex function of $\btheta$. The use of gradient descent algorithms is thus not straightforward in this context. Moreover, the gradient of $\btheta\to\Gtheta ^{-1}$  is unknown and only  zero-order optimization methods can be applied to solve \eqref{eq:2:minimization_parametric}. For this reason, in the following of this section, we analyze the basic approach which consists in estimating $\btheta_{C}^*$ by approximating $\bTheta$ with a finite regular grid $\bTheta_N$ of cardinality $N$. Therefore, for a given parametric copula $(\Ctheta)_{\btheta \in \bTheta}$ and a given $\alpha \in (0, 1)$, we restrict the problem \eqref{eq:2:minimization_parametric} to
\begin{equation}
	\btheta_{N}^* \in \argmin_{\btheta \in \bTheta_N} \Gtheta ^{-1} (\alpha).
	\label{eq:2:minimization_parametric_grid}
\end{equation}

\subsection{Estimation with a grid search strategy}
\label{subsec:2:grid-search}
 
In the restricted problem \eqref{eq:2:minimization_parametric_grid}, the greater $N$, the closer $\btheta_{N}^*$  to the minimum $\btheta^*$ of $\bTheta$; obviously the convergence rate should depend on the regularity of the function $\eta$ and on the regularity of the quantile function $\btheta \mapsto \Gtheta ^{-1} (\alpha)$. Because $\eta$ has no closed form, the quantile function $\Gtheta ^{-1}(\alpha) $ has no explicit expression. The minimizer  $\btheta_{N}^*$  can be estimated by coupling the simulation of independent and identically distributed (i.i.d) data $(Y_1, \dots, Y_n)$, defined as realizations of the model output random variable $Y := \eta (\bX)$ with distribution $\text{d} \Gtheta$, with a minimization of the empirical quantile over $\bTheta_N$.

For $\btheta$ taking a value over the grid $\bTheta_N$, the empirical CDF of $Y$ is defined for any $y \in \RR$ by
\begin{equation}
	\widehat G_{\btheta} (y) = \frac{1}{n} \sum_{i=1}^n \mathbb 1_{Y_i \leq y}.
	\label{eq:2:empirical_distribution_function}
\end{equation}
The corresponding empirical quantile function $\Qtha$ is defined as in \eqref{eq:2:quantile_function} by replacing $G$ with its empirical estimate. For a given $\alpha$, the worst quantile on the fixed grid $\bTheta_N$ is given by
\begin{equation*}
	\min_{\btheta \in \bTheta_N} \Gtheta ^{-1} (\alpha).
\end{equation*}
and can be estimated by replacing the quantile function with its empirical function:
\begin{equation}
	\min_{\btheta \in \bTheta_N} \Qtha.
	\label{eq:2:extremum_quantile_over_grid}
\end{equation}
Finally the estimation of the minimum quantile parameter over the grid $\bTheta_N$ is denoted by
\begin{equation}
	\hthetaN = \argmin_{\btheta \in \bTheta_N} \Qtha.
	\label{eq:2:extremum_estimation_parametric_grid}
\end{equation}
The construction of the grid $\bTheta_N$ can be difficult because $\bTheta$ can be unbounded (e.g. $\bTheta = [1, \infty]$ for a Gumbel copula). To tackle this issue, we chose to construct $\bTheta_N$ among a normalized space using a concordance measure, which is bounded in $[-1, 1]$ and does not rely on the marginal distributions. We chose the commonly used Kendall rank correlation coefficient (or Kendall's tau) \cite{kendall1938new} as a concordance measure to create this transitory space. This non-linear coefficient $\tau \in [-1, 1]$ is related to the copula function as follows:
$$
	\tau = 4 \int_{-1}^1 \int_{-1}^1 C_{\theta}(u_1, u_2) \mathrm d C(u_1, u_2) - 1.
$$
For many copula families, this relation is much more explicit (see for instance \cite{frees1998understanding}). Therefore, the finite grid is created among $[-1, 1]^p$ and each element of this grid is converted to the copula parameter $\btheta$. Moreover, the use of concordance measures gives a normalized expression of the strength of dependencies for all pairs of variables, independently of the used copula families.

The consistency of estimators \eqref{eq:2:extremum_quantile_over_grid} and \eqref{eq:2:extremum_estimation_parametric_grid} is studied in next section, under general regularity and geometric assumptions on $\eta$ and the functional $\btheta \mapsto \Ptheta$.
	
\subsection{Consistency of  worst quantile-related estimators}
\label{subsec:2:consistency}

In this section, we give consistency results of the estimators $\min_{\btheta \in \bTheta_N} \Qtha$ and $\hthetaN$, for a growing sequence of grids on the domain $\bTheta$. For easier reading, we skip some definitions needed for our assumptions. Section \ref{sec:proof_consitency} in Appendix provides a more complete presentation, including the formal definition of the modulus of increase 
of the quantile function. 

Let $\alpha$ be a fixed value in $(0,1)$. To approximate $\cD $, we consider a sequence of finite discrete grids $(\bTheta_{N})_{N \geq 1}$ on $\cD $ where $N$ is the cardinal of $\bTheta_{N}$ and such that
\begin{equation}
	\sup_{\btheta \in \bTheta ,  \, \btheta'    \in \bTheta_N}    \| \btheta -  \btheta'  \|_2     \rightarrow 0  \textrm{ \quad as $N$ tends to infinity.}
	\label{eq:2:approx_grid}
\end{equation}
We first introduce technical hypotheses required for the consistency result which are commented further in the text. 
\setcounter{assumption}{0}
\begin{assumption}
	\label{hyp:2:continuity_copula}
	For all $\btheta \in \bTheta$, the distribution $\Ptheta$ admits a density $\ftheta$ for the Lebesgue measure and the copula $\Ctheta$ admits a density $\ctheta$ for the Lebesgue measure on $[0, 1]^d$ such that 
	\begin{align*}
		\bTheta \times [0, 1]^d &\longrightarrow \RR \\
		\btheta \times (x_1, \dots, x_d) &\longrightarrow \ctheta (x_1, \dots, x_d)
	\end{align*}
	is a continuous function.
\end{assumption}
\let\theassumption\origtheassumption
\begin{assumption}
	\label{hyp:2:continuous_distribution_D}
	For all $\btheta \in \bTheta$, $\Gtheta$ is a continuous function.
\end{assumption}
\let\theassumption\origtheassumption
\begin{assumption}
	\label{hyp:2:increasing_distribution_D}
	For all $\btheta \in \bTheta$, $\Gtheta$ is strictly increasing and the modulus of increase 
	of $\Gtheta$ at $\Qta$ is lower bounded by a positive function $\underline{ \epsilon}_{\bTheta}$.
\end{assumption}
\begin{assumption}
	\label{hyp:2:min_unicity_D}
	There exists an unique $\btheta^* \in \bTheta$ minimizing $ \btheta \mapsto \Qta$.
\end{assumption}
\setcounter{assumption}{3}

Let $(N_n)_{n\geq 1}$  be a sequence of integers such that  $ N_n \lesssim n^{\beta} $ for some $\beta >0$. For every $n \geq 1$ we consider the grid $\bTheta_{N_n}$ and for every $\btheta \in \bTheta_{N_n}$ we compute the empirical quantile $\Qtha$ from a sample of $n$ i.i.d variables $Y_1,\dots, Y_n $ with $Y_i = \eta(\bX_i)$, where the $\bX_i's$ are i.i.d. random vectors with distribution $\Ptheta$.  We then introduce the extremum estimator 
\begin{equation}
	\label{eq:2:estim}
	\htheta :=  \widehat {\btheta}_{N_n} .
\end{equation}
\begin{theor}
	\label{theor:2:consistency_extremum_estimator}
	Under Assumptions \ref{hyp:2:continuity_copula}, \ref{hyp:2:continuous_distribution_D} and \ref{hyp:2:increasing_distribution_D}, for all $\varepsilon > 0$ we have
	\begin{equation} 
		\label{eq:2:cv_quantile}
		P \left(   \left| \widehat G_{\hat \btheta} ^{-1} (\alpha) -   {G ^{-1}_C}^\star  (\alpha)   \right| > \varepsilon \right)  \xrightarrow{n \rightarrow \infty} 0.
	\end{equation}
	Moreover, if Assumption~\ref{hyp:2:min_unicity_D} is also satisfied, then for all $h > 0$ we have
	$$
		\PP [|\htheta - \btheta_C^*| > h] \xrightarrow{n \rightarrow\infty} 0
	$$
	\emph{(proof given in Appendix ~\ref{sec:proof_consitency})}.
\end{theor}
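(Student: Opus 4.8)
The plan is to control $\widehat G_{\hat\btheta}^{-1}(\alpha)-{G^{-1}_C}^\star(\alpha)$ through a deterministic grid-discretisation error and a stochastic estimation error, both handled via continuity of the map $\btheta\mapsto\Qta$. So the first step is to prove that $\btheta\mapsto\Qta$ is continuous on the compact set $\bTheta$. By Assumption~\ref{hyp:2:continuity_copula} the map $(\btheta,\bu)\mapsto\ctheta(\bu)$ is continuous, so for any sequence $\btheta_k\to\btheta$ the densities $f_{\btheta_k}$ given by \eqref{eq:2:copula_density} converge to $f_{\btheta}$ Lebesgue-a.e., and Scheffé's lemma upgrades this to $P_{\btheta_k}\to\Ptheta$ in total variation. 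Since total variation distance does not increase under the measurable pushforward by $\eta$, we get $G_{\btheta_k}\to\Gtheta$ in total variation, hence pointwise; together with Assumptions~\ref{hyp:2:continuous_distribution_D} and \ref{hyp:2:increasing_distribution_D} (each $\Gtheta$ continuous and strictly increasing) this gives $G_{\btheta_k}^{-1}(\alpha)\to\Qta$. Continuity on a compact set then yields uniform continuity and, in particular, attainment of the infimum ${G^{-1}_C}^\star(\alpha)=\inf_{\btheta\in\bTheta}\Qta$ at some $\btheta^{*}$.

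Next I would bound the two errors. For the discretisation error, \eqref{eq:2:approx_grid} furnishes grid points $\btheta_{N_n}'\in\bTheta_{N_n}$ with $\|\btheta_{N_n}'-\btheta^{*}\|_2\to 0$, whence $\min_{\btheta\in\bTheta_{N_n}}\Qta\le G_{\btheta_{N_n}'}^{-1}(\alpha)\to{G^{-1}_C}^\star(\alpha)$ by continuity, while trivially $\min_{\btheta\in\bTheta_{N_n}}\Qta\ge{G^{-1}_C}^\star(\alpha)$; so the deterministic minimum over the grid converges to ${G^{-1}_C}^\star(\alpha)$. For the estimation error, the DKW inequality gives $\PP(\|\widehat G_{\btheta}-\Gtheta\|_\infty>t)\le 2e^{-2nt^2}$ for each fixed $\btheta$, so a union bound over the $N_n\lesssim n^{\beta}$ grid points gives
\begin{equation*}
	\PP\!\left(\max_{\btheta\in\bTheta_{N_n}}\|\widehat G_{\btheta}-\Gtheta\|_\infty>t\right)\le 2N_n\,e^{-2nt^2}\xrightarrow{n\to\infty}0
\end{equation*}
for every fixed $t>0$. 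The uniform lower bound $\underline{\epsilon}_{\bTheta}$ on the modulus of increase of $\Gtheta$ at $\Qta$ (Assumption~\ref{hyp:2:increasing_distribution_D}) then converts sup-norm closeness of the CDFs into closeness of their $\alpha$-quantiles, uniformly in $\btheta$, so $\max_{\btheta\in\bTheta_{N_n}}|\Qtha-\Qta|\to 0$ in probability.

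To conclude the first assertion, $\htheta$ is itself a grid point and minimises $\btheta\mapsto\Qtha$ over $\bTheta_{N_n}$, so the elementary inequality $|\min_i a_i-\min_i b_i|\le\max_i|a_i-b_i|$ gives $|\widehat G_{\hat\btheta}^{-1}(\alpha)-\min_{\btheta\in\bTheta_{N_n}}\Qta|\le\max_{\btheta\in\bTheta_{N_n}}|\Qtha-\Qta|\to 0$ in probability, which combined with the discretisation step yields \eqref{eq:2:cv_quantile}. For the second assertion, Assumption~\ref{hyp:2:min_unicity_D} makes $\btheta^{*}=\btheta_C^{*}$ the unique minimiser, hence a well-separated one on the compact $\bTheta$: for each $h>0$ the gap $\gamma(h):=\inf\{\Qta-{G^{-1}_C}^\star(\alpha):\btheta\in\bTheta,\ \|\btheta-\btheta_C^{*}\|_2\ge h\}$ is strictly positive. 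On the event, of probability tending to $1$, where $\max_{\btheta\in\bTheta_{N_n}}|\Qtha-\Qta|<\gamma(h)/3$ and $G_{\btheta_{N_n}'}^{-1}(\alpha)<{G^{-1}_C}^\star(\alpha)+\gamma(h)/3$, the inequality $\|\htheta-\btheta_C^{*}\|_2\ge h$ would force $\widehat G_{\hat\btheta}^{-1}(\alpha)>G_{\hat\btheta}^{-1}(\alpha)-\gamma(h)/3\ge{G^{-1}_C}^\star(\alpha)+2\gamma(h)/3$, contradicting $\widehat G_{\hat\btheta}^{-1}(\alpha)\le\widehat G_{\btheta_{N_n}'}^{-1}(\alpha)<{G^{-1}_C}^\star(\alpha)+2\gamma(h)/3$; hence $\PP(\|\htheta-\btheta_C^{*}\|_2>h)\to 0$.

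The hard part is the continuity step: passing from $\btheta$-continuity of the copula density to $\btheta$-continuity of the output quantile, since it goes through total-variation convergence of the input laws, its stability under the merely measurable map $\eta$, and the translation of pointwise CDF convergence into quantile convergence — which is exactly where Assumptions~\ref{hyp:2:continuous_distribution_D}--\ref{hyp:2:increasing_distribution_D} and the modulus of increase are used. The union bound of the estimation step is the secondary technical point, but it is routine once the grid size grows only polynomially in $n$.
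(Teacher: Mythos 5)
Your proposal is correct and takes essentially the same route as the paper's Appendix~\ref{sec:proof_consitency}: continuity of $\btheta \mapsto \Gtheta^{-1}(\alpha)$ from Assumption~\ref{hyp:2:continuity_copula} (you via Scheff\'e's lemma and total-variation contraction under the pushforward by $\eta$, the paper via dominated convergence with the bound $\bar c \prod_j f_j$), uniform convergence of the empirical quantiles over the polynomially growing grid via DKW, a union bound and the uniform lower bound $\underline{\epsilon}_{\bTheta}$ (you on the CDF side, the paper through the uniform quantile process), and a well-separation argument under Assumption~\ref{hyp:2:min_unicity_D} for the consistency of $\htheta$. These are only minor technical variants of the same decomposition, and all steps check out.
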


It would be possible to provide rates of convergence for this extremum quantile and for $\btheta^\star$ at the price of more technical proofs, by considering also the dimension metric of the domain $\bTheta$ and the modulus of increase  of the function $\btheta \mapsto G_{\btheta}(\alpha) $ (see for instance the proofs of Theorems 1 and 2 in \cite{chazal2015rates} for an illustration of such computations). It would be also possible to derive similar results for alternative extremum quantities. One first example, useful in many applications, would be to estimate some risk probability by determining an extremum $\inf_{\btheta \in \bTheta} \Gtheta(y)$ of the CDF for a fixed $y$.

This consistency result could also be extended for regular functional of $\Gtheta$ or $\Gtheta^{-1}$, such that 
$$
	\inf_{\btheta \in \bTheta} \int_{y \geq y_0}  \Gtheta(y) \mathrm{d}y \quad \mathrm{or} \quad 
	\inf_{\btheta \in \bTheta} \int_{\alpha \geq \alpha_0}  \Qta \mathrm{d}y, 
$$
for some fixed values $y_0$ and $\alpha_0$. Extending our results for such quantities is possible essentially because the Dvoretsky-Kiefer-Wolfowitz (DKW) inequality \cite{Dvoretzky56}, used in the proof, gives an uniform control on the estimation of the CDF and the quantile function.

We now discuss the three first assumptions and provide some geometric and probabilistic interpretations of them. Assumption \ref{hyp:2:continuity_copula} requires some regularity of the input distribution with respect to $\btheta$. This is indeed necessary to locate the minimum of the quantile. Assumption \ref{hyp:2:continuous_distribution_D} and \ref{hyp:2:increasing_distribution_D} ensure that the output quantile function $\Qt$ has a regular behavior in a neighborhood of the computed quantile $\Qta$. Assumption \ref{hyp:2:continuous_distribution_D} ensures that the output distribution $\mathrm d \Gtheta$ has no Dirac masses whereas Assumption \ref{hyp:2:increasing_distribution_D} ensures that there is no area of null mass inside the domain of $\mathrm d \Gtheta$.

Figure \ref{fig:2:deviation_increase_CDF} illustrates Assumption \ref{hyp:2:continuous_distribution_D} with a possible configuration of the input distribution. For $\btheta \in \bTheta$ an  $\delta > 0 $, we consider a small neighborhood $[\Qta - \delta, \Qta + \delta]$ of $\Qta $, and the pre-image of this neighborhood. The two right figures are the CDF $\Gtheta$ (top) and PDF $\gtheta$ (bottom) of the output variable $Y$ for a given $\btheta$. The figure at the left hand represents the contours of the pre-image in the input space. The red plain line is the level set $ \eta^{-1} (\Qta)$ and the dot blue line is the perturbed level set $ \eta^{-1}(\Qta \pm \delta))$. The blue area in the right figure corresponds to $[\Qta - \delta, \Qta + \delta]$ and the pre-image of this neighborhood is the blue area in the left figure. Assumption \ref{hyp:2:continuous_distribution_D} requires that the mass of the blue domain is lower bounded by a positive function $\underline{\varepsilon}_{\bTheta}( \delta)$ that does not depend on $\btheta$.

\begin{figure}
	\centering
	\includegraphics[width=1.\textwidth]{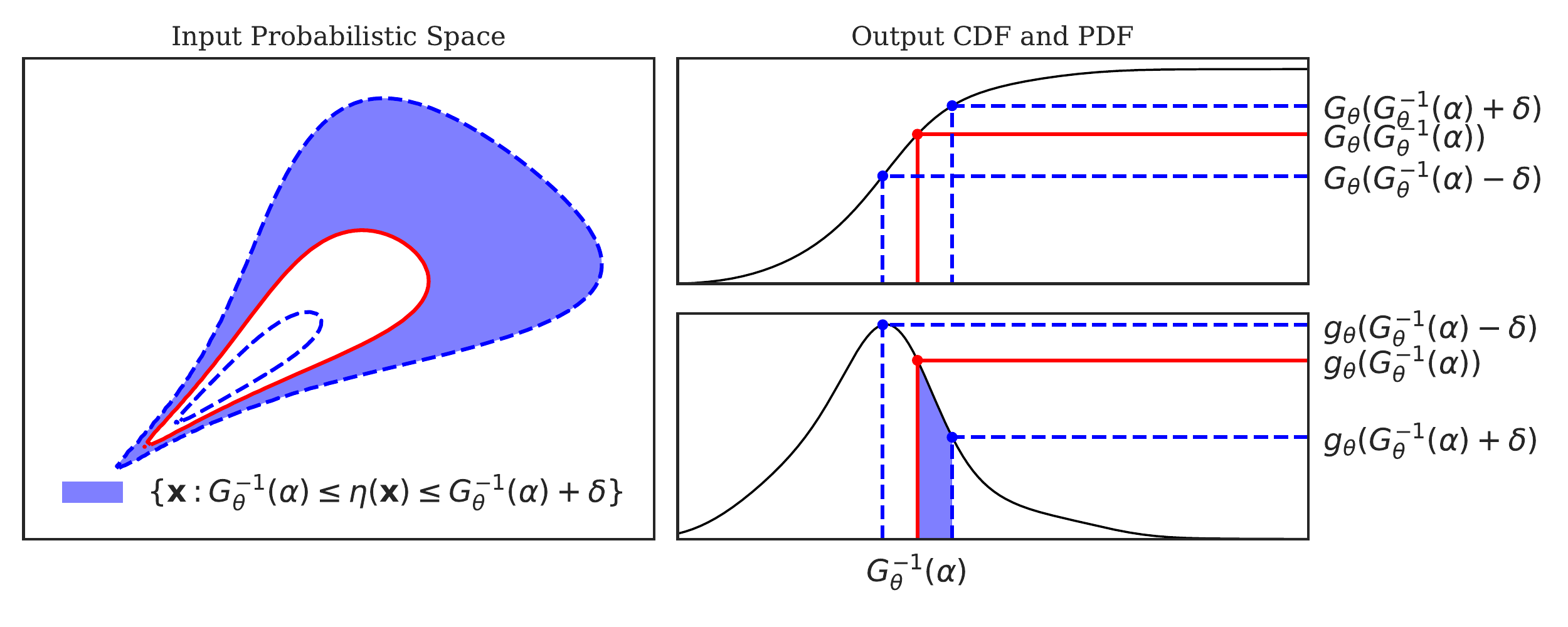}
	\caption{Pre-image (left) and image (right) of a modulus of increase of $\Gtheta$ at the point $\Gtheta ^{-1} (\alpha)$ for a deviation $\pm \delta$.}
	\label{fig:2:deviation_increase_CDF}
\end{figure}

It is possible to give sufficient conditions on the input distribution $\Ftheta$ and on the geometry of the code $\eta$ to obtain Assumptions \ref{hyp:2:continuous_distribution_D} and \ref{hyp:2:increasing_distribution_D}. Using the definition of the modulus of continuity from Equation \eqref{eq:appx:local_modulus_increase} in Appendix, it comes
\begin{eqnarray*}
	\epsilon_{\Gtheta}(\delta, \Qta ) &=&  \max \left[ \int_{\{ \Qta \leq g \leq \Gtheta^{-1}(\alpha) + \delta \}} \ftheta (\bx) \mbox d \lambda (\bx) ;
	\int_{\{ \Gtheta^{-1}(\alpha) - \delta \leq g \leq \Qta  \}} \ftheta (\bx) \mbox d \lambda (\bx) \right] \\ 
	& \geq & 	\int_{\{ \Gtheta^{-1}(\alpha)  \leq g \leq \Qta + \delta  \}} \ftheta (\bx) \mbox d \lambda (\bx) 	
\end{eqnarray*}
Assume that the code $\eta$ is a Lipschitz and differentiable function with no null derivatives almost everywhere in the neighborhood of $\Gtheta^{-1}(\alpha)$. Then, using the \textit{coarea formula} (see for instance \cite{evans2015measure}, Section 3.4.4, Proposition 3), we find that
$$	
\epsilon_{\Gtheta}(\delta, \Gtheta^{-1} (\alpha) ) \geq 
	\int_{\Gtheta^{-1} (\alpha)}^{\Gtheta^{-1}(\alpha) + \delta} \left[ \int_{\eta^{-1} \{ u \}} \frac{\ftheta}{ \|\nabla \eta \|} \mbox d \mathcal H^{d-1}\right] \mbox d u,
$$
where $\mathcal H^{d-1}$ is the $d-1$ dimensional Hausdorff measure (see for instance Chapter 2 in \cite{evans2015measure}). If the copula and the code are such that there exists a constant $I$ such that for any $\btheta \in \cD$ and any $u$ in the support of $\mathrm d \Gtheta$
$$
	\int_{\eta^{-1} \{ u \}} \ftheta   \, \mbox d \mathcal H^{d-1}  \leq I ,
$$
then we find that 
$$
	\epsilon_{\Gtheta}(\delta, \Gtheta^{-1} (\alpha) ) \geq  \delta \frac I {\|\nabla \eta \|_\infty}.
$$
Note that $ \|\nabla \eta \|_\infty  < \infty $ since $\eta$ is assumed to be Lipschitz. We have proved that Assumption~\ref{hyp:2:increasing_distribution_D} is satisfied in this context. Finally, by rewriting again the co-area formula for $\Gtheta(y)$, we find that Assumption~\ref{hyp:2:continuous_distribution_D} is satisfied as soon as the set of stationary points ($\|\nabla \eta (x)\| = 0$) of all level set  $\eta^{-1} \{ u \}$  has null mass for the Hausdorff measure.

In conclusion, we see that for smooth copulas, Assumptions~\ref{hyp:2:increasing_distribution_D} and \ref{hyp:2:continuous_distribution_D} mainly depend on the regularity of the code, by requiring on one side that $\eta$ does not oscillate to much and on the other side that the set of stationary points does not have a positive mass on the level sets of $\eta$.
\section{A preliminary study of the copula influence on quantile minimization}
\label{sec:methodology}

This section is dedicated to a preliminary exploration of the influence of copula structure on the behavior of the worst quantile, illustrated with toy examples. Especially, while it could be expected that $\Qtheta(\alpha)$ is a monotonic function with $\btheta$, and that the minimum can be reached for a trivial copula (i.e., reaching the \FH bounds). Our experiments show that this behavior is not systematic.

\subsection{About the copula choice}

One of the most common approaches to model the dependence between random variables is to assume linear correlations feeding a Gaussian copula. In this case, the problem is reduced by determining the correlation matrix of $\bX$ that minimizes $\Qtheta(\alpha)$. However, the positive semi-definite constraint on the correlation matrix makes the exploration difficult and the minimization harder when the problem dimension increases. Moreover, such a Gaussian assumption is very restrictive and is inappropriate for simulating heavy tail dependencies \cite{malevergne2003testing}. Still in this elliptical configuration, the $t$-copulas \cite{demarta2005t} can be used to counterpart these problems. Nevertheless, tail dependencies are symmetric and with equal strengths for each pair of variables. Another alternative is to consider multivariate Archimedean copulas \cite{mcneil2009multivariate} which are great tools to describe asymmetric tail dependencies. However, only one parameter governs the strength of the dependence among all the pairs, which is very restrictive and not flexible in high dimension. For a same correlation measure between two random variables, multiple copulas can be fitted and lead to a different distribution of $Y$.

It is clear that the copula choice of $\bX$ has a strong impact on the distribution of $Y$ (see for instance \cite{Tang2013impactcopula}). Therefore, various copula types should be tested to determine the most conservative configuration. In the following, we may consider a flexible approach setting by modeling the input multivariate distribution using regular vine copulas (R-vines). The necessary basics of R-vines are introduced in Section \ref{subsec:4:rationale} and detailed in Appendix \ref{def:appx:r_vine}.

\subsection{About the monotony of the quantile}
\label{subsec:3:monotonicity}

For many simple case studies case studies, the worst quantile is reached for perfect dependencies (\FH bounds). More generally, when the function has a monotonic behavior with respect to many variables, it is likely that the minimum output quantile is reached at the boundary of $\bTheta$. This phenomenon is observed for various physical systems.

To illustrate this phenomenon, we consider a simplified academic model that simulates the overflow of a river over a dike that protects industrial facilities. The river overflow $S$ is described by
\begin{equation}
	S = H_d + C_b - Z_v - H  \qquad \text{with} \qquad H = \left( \frac{Q}{B K_s\sqrt{\frac{Z_m - Z_v}{L}}}\right )^{0.6},
	\label{eq:3:flood_model}
\end{equation}
such as, when $S < 0$, a flooding occurs. The involved parameters of \eqref{eq:3:flood_model} are physical characteristics of the river and the dike (e.g., flow rate, height of the dike) which are described by random variables with known marginal distributions. See \cite{iooss2015review} for more information. For a given risk $\alpha$, we aim at quantifying the associated overflow's height describe by the $\alpha$-quantile of $S$. We extend this model by supposing that the friction (Strickler-Manning) coefficient $K_s$ and the maximal annual flow rate $Q$ are dependent with an unknown dependence structure. To show the influence of a possible correlation between $K_s$ and $Q$ on the quantile of $S$, we describe their dependence structure with multiple copula families. 

The Figure \ref{fig:3:flood_example_quantile_variation} shows the variation of the estimated quantile of $S$ (with a large sample size) in function of the Kendall coefficient $\tau$ between $K_s$ and $Q$ for different copula families. We observe different slopes of variation for the different copula families, with lower quantile values for the copulas with heavy tail dependencies (i.e., Clayton, Joe). At independence ($\tau=0$) and for the counter-monotonic configuration ($\tau=-1$),the quantile values of these families are obviously equivalent. This variation is slight and the quantile is still above zero, but this shows how the dependencies can influence the results of a reliability problem. This illustration shows that the minimum is reached at the boundary of the exploration space, where the two variables are perfectly correlated. 
\begin{figure}
	\centering
	\includegraphics[width=0.85\textwidth]{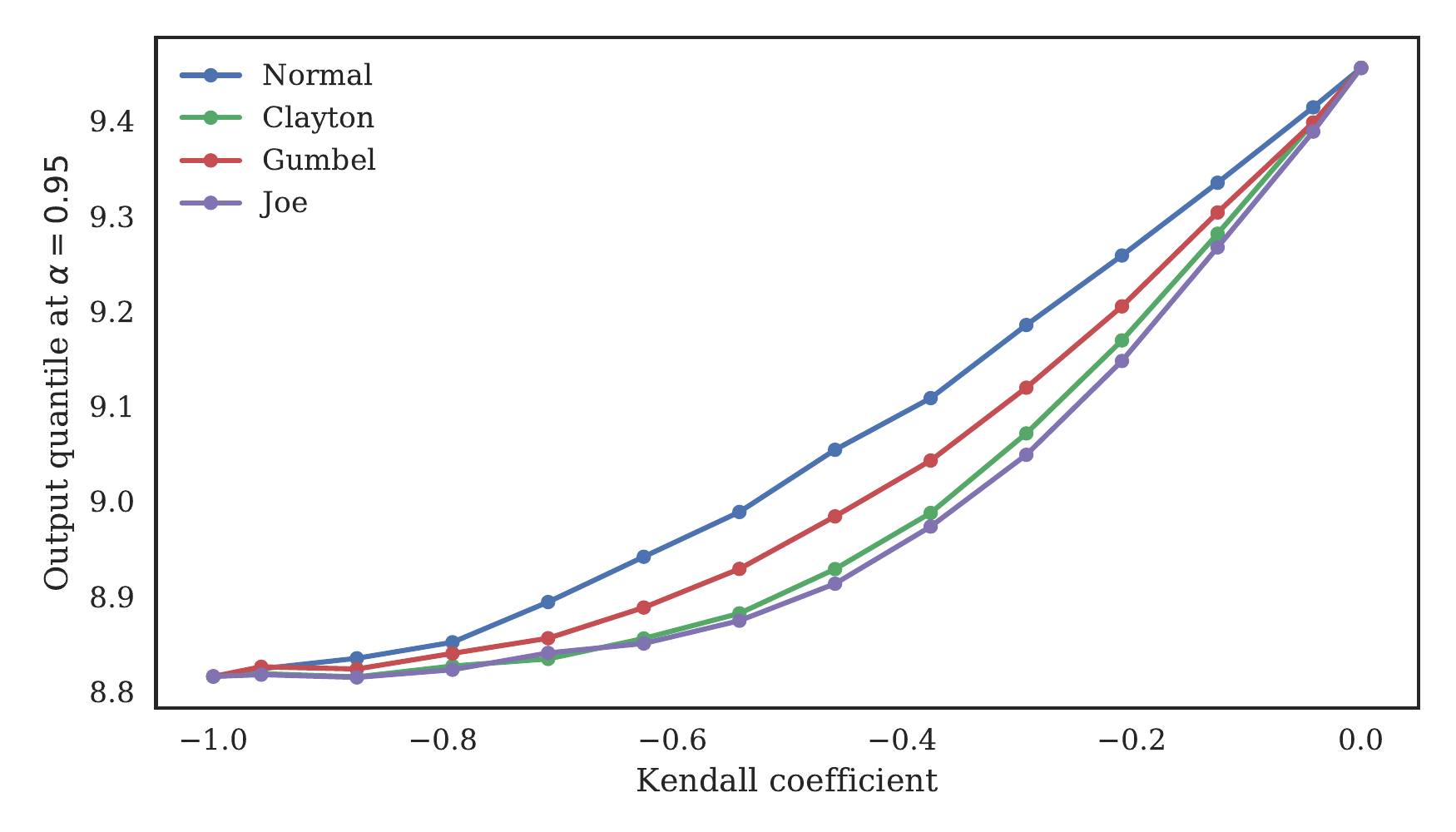}
	\caption{Variation of the quantile of the overflow distribution with the Kendall coefficient $\tau$ for $\alpha=95 \%$ and different copula families (Gaussian, Clayton, Gumbel and Joe).}
	\label{fig:3:flood_example_quantile_variation}
\end{figure}

We can take advantage of this observation to speed up the algorithms presented in the next sections by exploring only the boundaries of $\bTheta$. However, assuming that the minimum is reached on the boundary of $\bTheta$ is a strong assumption that can be unsatisfied in some applications. See Fallacy 3 of \cite{embrechts2002correlation} for a highlight of this pitfall. 

To illustrate this statement, we now give a counter example in the bidimensional setting. We assume uniform marginal distributions for the input such that $X_1 \sim \cU(-3, 1)$ and $X_2 \sim \cU(-1, 3)$, and we consider the model function
\begin{equation}
	\eta(x1, x2) = 0.58 x_1^2 x_2^2 - x_1 x_2 - x_1 - x_2 .
	\label{eq:3:toy_example_bidim}
\end{equation}
The same experience as for Figure \ref{fig:3:flood_example_quantile_variation} is established and the results are shown in Figure \ref{fig:3:example_bidim_1}. The slopes of the quantile estimations with the Kendall coefficient, for each copula families, are quite different than the results of Figure \ref{fig:3:flood_example_quantile_variation}. We observe that the quantile is not monotonic with the Kendall coefficient and its minimum is not reached at the boundary, but for $\tau \approx 0.5$. Moreover, the Gaussian copula is the family that minimizes the most the quantile. It shows that copula with tail dependencies are not always the most penalizing.
\begin{figure}
	\centering
	\includegraphics[width=0.85\textwidth]{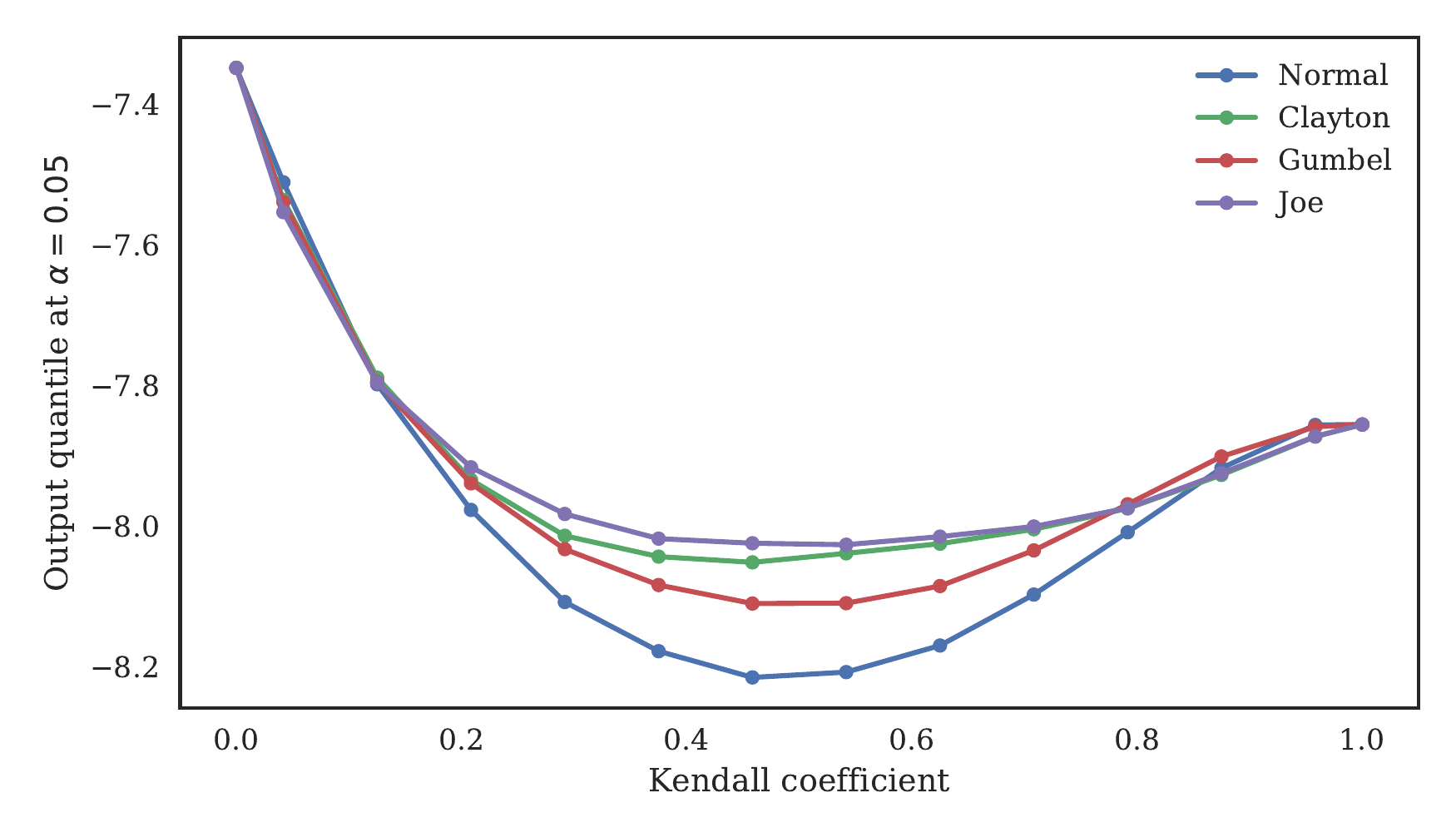}
	\caption{Variation of the output quantile with the Kendall coefficient $\tau$ for $\alpha=5\%$ and different copula families (Gaussian, Clayton, Gumbel and Joe).}
	\label{fig:3:example_bidim_1}
\end{figure}

A second example, inspired from Example 6 of \cite{embrechts2002correlation}, also shows that the worst case dependence structure in an additive problem is not necessary for perfectly correlated variables. We consider a simple portfolio optimization problem with two random variables $X_1$ and $X_2$ with generalized Pareto distributions such as $F_1(x) = F_2(x) = \frac{x}{1+x}$. We aim at maximizing the profit of the portfolio, which is equivalent as minimizing the following additive model function
\begin{equation}
	\eta(X_1, X_2) = -(X_1 + X_2).
\label{eq:3:additive_model}
\end{equation}
We consider the median ($\alpha=0.5$) of the output as an efficiency measure. The Figure \ref{fig:3:example_bidim_2} shows the output median in function of the Kendall coefficient $\tau$ between $X_1$ and $X_2$. Just like the previous example, we observe a non-monotonic slope of the median in function of $\tau$. The variation can be significant and the minimum is obtained at $ \tau \approx 0.53$ for the heavy tail copula families (i.e., Clayton and Joe). The phenomenon can be explained by the marginal distributions of the random variables, which are close Pareto distributions. A large correlation seems to diminish the influence of the tails, which gives a higher quantile value. This explains why the minimum is obtained for a dependence structure other that independence or the perfect dependence.
\begin{figure}
	\centering
	\includegraphics[width=0.85\textwidth]{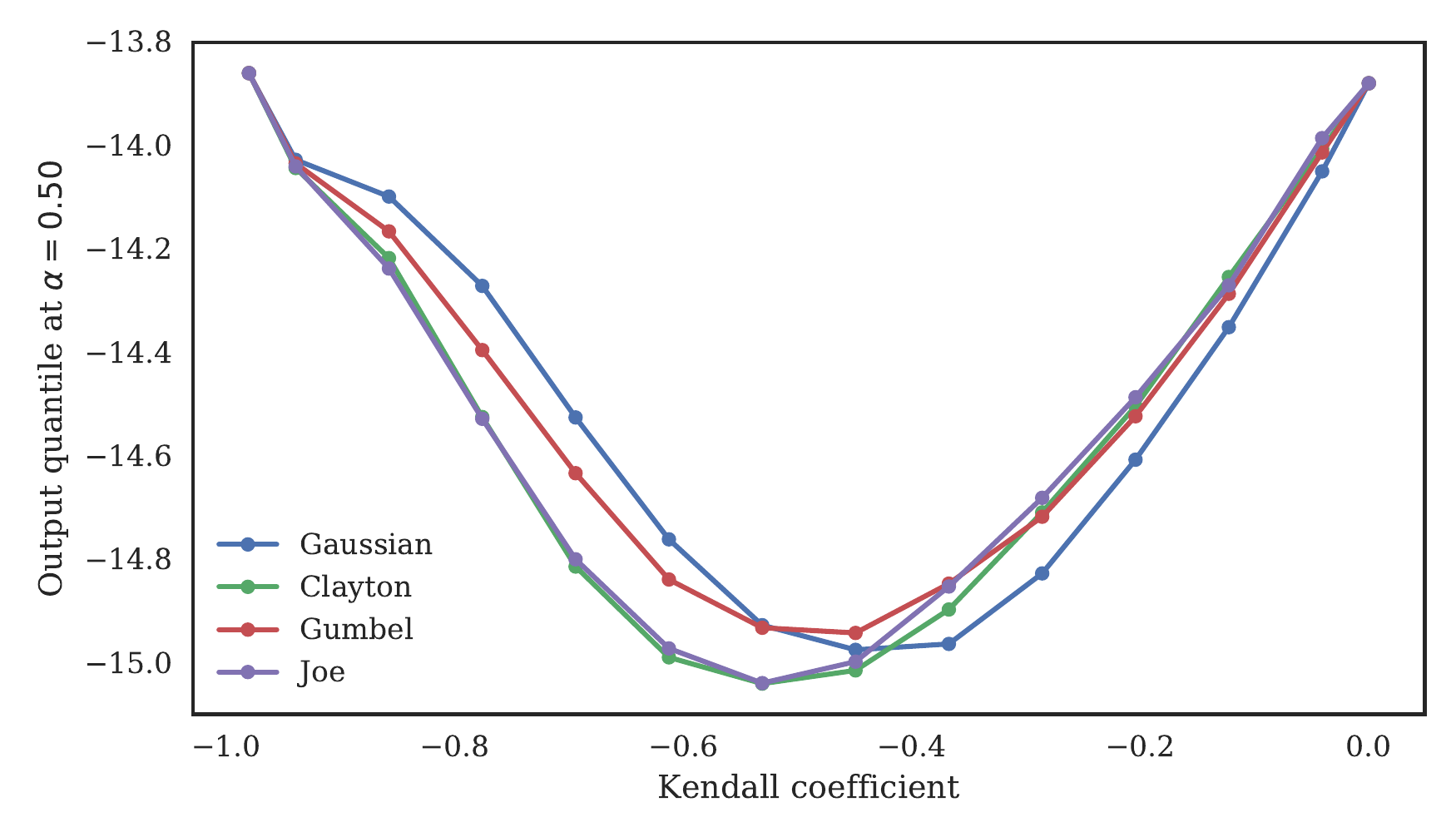}
	\caption{Variation of the portfolio median with the Kendall coefficient $\tau$ for different copula families.}
	\label{fig:3:example_bidim_2}
\end{figure}

Therefore, these examples show that the worst quantile can be reached for other configurations than the perfect dependencies.
\section{Quantile minimization and choice of penalized correlation structure}
\label{sec:quantile_minimization}

This section first provides a rationale for choosing the so-called R-vine structure as a preferential copula structure for modeling the variety of correlations between inputs. Then, the search for a minimum quantile is presented in two times. Subsection \ref{subsec:4:estimation_fixed} proposes an exhaustive grid-search algorithm for estimating this quantile when the R-vine copula structure is fixed with a given pair-copula families and indexed by the parameter vector $\btheta$. Subsection \ref{subsec:4:estimation_iterative} extends this rigid framework by permitting the search of particular sub-copula pairwise structures, such that the minimization be more significant. In each situation, examples are provided to demonstrate the feasibility of the approach.

\subsection{A rationale for R-vine copula structures}
\label{subsec:4:rationale}

Representing multi-dimensional dependence structures in high dimensional settings is a challenging problem. For the following definition, we simplify the expressions by omitting the use of $\btheta$: $f=\ftheta$, $F=\Ftheta$ and $c = \ctheta$. By recursive conditioning, the joint density can be written as a product of conditioning distributions such as
\begin{equation}
	f(x_1, \dots, x_d) = f_1(x_1) \cdot f_{2|1}(x_{2} | x_1) \cdot f_{3 |1, 2}(x_3 | x_1, x_2) \cdots f_{d | 1, 2, \dots d-1}(x_d | x_1, x_2, \dots, x_{d-1}).		
	\label{eq:4:PCC_eq1}
\end{equation}
For clarity reason, we now simplify the expression with $f_{3 |1, 2} = f_{3 |1, 2}(x_3 | x_1, x_2)$ and so on for other orders. From \eqref{eq:2:copula_density}, the conditioning densities of \eqref{eq:4:PCC_eq1} can be rewritten as products of conditioning copula and marginal densities. For example, in a case of three variables and using \eqref{eq:4:PCC_eq1}, one possible decomposition of the the joint density can be written as 
\begin{equation}
	f(x_1, x_2, x_3) = f_1 \cdot f_{2|1} \cdot f_{3 |1, 2}.
	\label{eq:4:PCC_exemple_1}
\end{equation}
Using  \eqref{eq:2:copula_density}, the reformulation of $f_{3 |1, 2}$ leads to 
\begin{equation}
	f_{3 |1, 2} = \frac{f_{1, 3|2}}{ f_{1 | 2}} = \frac{c_{1, 3 |2} \cdot f_{1|2} \cdot f_{3|2}}{f_{1|2}} = c_{1, 3 |2} \cdot f_{3|2}
	\label{eq:4:PCC_exemple_f3}
\end{equation}
where $c_{1, 3 |2} = c_{1, 3 |2}(F_{1 |2} (x_1 | x_2), F_{3|2} (x_3 |x_2))$. By developing $f_{3 |1, 2}$ in the same way, we find that
\begin{equation}
	f_{3 |1, 2} = c_{1, 3 |2} \cdot c_{2, 3} \cdot f_{3}.
	\label{eq:4:PCC_exemple_3}
\end{equation}
Thus, by replacing the expression of $f_{3 |1, 2}$ in \eqref{eq:4:PCC_exemple_1} and doing the same procedure for $f_{2|1}$, the joint density can be written as
\begin{equation}
	f(x_1, x_2, x_3) = f_{1} \cdot f_{2} \cdot f_{3} \cdot c_{1, 2} \cdot c_{2, 3} \cdot c_{1, 3 |2}.
	\label{eq:4:PCC_exemple_4}
\end{equation}
This final representation of the joint density based on pair-copulas has been developed in \cite{joe1996families} and is called the pair-copula construction (PCC). The resulting copula represented by the product of conditional copulas in \eqref{eq:4:PCC_exemple_4} offers a very flexible way to construct high-dimensional copulas. However, it is not unique; indeed, \eqref{eq:4:PCC_eq1} has numerous decomposition forms and it increases with the dimension.

To describe all such possible constructions in an efficient way, \cite{bedford2001probability, bedford2002vines} introduced the vine models. This graphical tool, based on a sequence of trees, gives a specific way to decompose the multivariate probability distribution. Basically, a vine model is defined by
\begin{itemize}
	\item a structure of trees which can be represented by a matrix \cite{morales2010bayesian},
	\item a copula family for each pair of the structure,
	\item a parameter for each pair-copula.
\end{itemize}
A R-vine is the general construction of a vine model, but particular cases exists such as the D-vines and C-vines, described in Appendix \ref{sec:vine_copulas}. Vine models were deeply studied in terms of density estimation and model selection using maximum likelihood \cite{Aas2009mle}, sequential estimation \cite{kurowicka2011estimation, dissmann2013selecting}, truncation  \cite{aas2012truncated} and Bayesian techniques  \cite{Gruber2015bayesian}. Their popularity and well-known flexibility led us to use R-vines in this article, despite the fact that in our context we are looking for a conservative form and not to select the most appropriate form with given data, in absence of correlated observations.

\subsection{Estimating a minimum quantile from a given R-vine}
\label{subsec:4:estimation_fixed}

\subsubsection{Grid-search algorithm}
\label{subsubsec:4:grid_search_algorithm}

Let $\Omega = \{ (i, j) : 1 \leq i, j\leq d \}$ be the set of all the possible pairs of $\bX$, in a $d$-dimensional problem. The number of pairs $p$ is associated to the size of $\Omega$ such as $p = |\Omega| = {{d}\choose{2}} = d (d-1) / 2$. We define $\cV$ as the vine structure and we consider fixed copula families for each pair. In this article, we only consider single parameter pair-copulas, such that the parameter $\btheta$ is a $p$-dimensional parameter vector with a definition space $\bTheta := \prod_{(i, j) \in \Omega } \Theta_{i, j} $ where $\Theta_{i, j}$ is the parameter space for the pair-copula of the pair $(i, j)$. However, the methodology can easily be extended to multi-parameter pair-copulas. Note that a pair-copula can be conditioned to other variables, depending on its position in the vine structure $\cV$. Thus, the input distribution $\mathrm d \Ftheta(\cV)$ is defined by the vine structure $\cV$, the copula families and the parameter $\btheta$. Also note that the copula parameter $\btheta$ is associated to the R-vine structure $\cV$ (i.e., $\btheta = \btheta_{\cV}$), see Section \ref{subsubsec:4:influence_vine}. For the sake of clarity, we simplify the notation to $\btheta$ only.

The most direct approach to estimate the minimum quantile is the Exhaustive Grid-Search algorithm, described by the following pseudo-code.

\begin{algorithm}[H]
\KwData{A vine structure $\cV$, a fixed grid $\bTheta_N$, a sample size $n$}
\For {$\btheta \in \bTheta_N$}{ 
	\textbf{1.} Simulate a sample $\{\bX_i \}_{i=1}^n$ according to $\text d F_{\btheta}(\cV)$\;
	\textbf{2.} Evaluate $\{Y_i = \eta(\bX_i)\}_{i=1}^n$\;
	\textbf{3.} Compute $\Qtha$: empirical quantile of $\{Y_i\}_{i=1}^n$\;
}
\KwResult{$\min\limits_{\btheta \in \bTheta_N} \Qtha$}
\caption{Exhaustive Grid-Search algorithm to minimize the output quantile.}
\label{algo:4:grid_search}
\end{algorithm}

For a given vine structure $\cV$, copula families, a grid $\bTheta_N$ and a sample size $n$, three steps are needed for each $\btheta \in \bTheta_N$. The first step simulates an input sample $\{\bX_i \}_{i=1}^n$ according to the distribution $\mathrm d \Ftheta(\cV)$ for a given sample size $n$. The second evaluates the sample through the model $\eta$. The third estimates the output quantile from the resulting sample $\{Y_i = \eta(\bX_i)\}_{i=1}^n$. The minimum quantile is took among the results of each loop.

\subsubsection{Influence of the vine structure}
\label{subsubsec:4:influence_vine}

Using R-vines, the dependence parameter $\btheta$ is associated to the vine structure $\cV$. Due to the hierarchy of the vine structure, some pair-copulas are conditioned to other variables and thus for their parameters. As an illustration, let us consider two vine structures with the two following copula densities, with the same simplified expressions as for \eqref{eq:4:PCC_exemple_4}:
\begin{align}
	c_{\cV_1}(x_1, x_2, x_3, x_4) &= c_{\theta_{1, 3}}\cdot c_{\theta_{1, 2}} \cdot c_{\theta_{2, 4}} \cdot c_{\theta_{2, 3 | 1}} \cdot c_{\theta_{1, 4 | 2}} \cdot c_{\theta_{3, 4 | 1,2}} \\
	c_{\cV_2}(x_1, x_2, x_3, x_4) &= c_{\theta_{1, 3}}\cdot c_{\theta_{3, 4}} \cdot c_{\theta_{2, 4}} \cdot c_{\theta_{1, 4 | 3}} \cdot c_{\theta_{2, 3 | 4}} \cdot c_{\theta_{1, 2 | 3,4}} .
\end{align}
The difference between these densities is the conditioning of some pairs, the dependence parameters of theses vines are $\btheta_{\cV_1} = [\theta_{1, 2}, \theta_{1, 3}, \theta_{1, 4 | 2}, \theta_{2, 3 | 1}, \theta_{2, 4}, \theta_{3, 4 | 1,2}]$ and $\btheta_{\cV_2} = [\theta_{1, 2 |3, 4}, \theta_{1, 3}, \theta_{1, 4 | 3}, \theta_{2, 3 | 4}, \theta_{2, 4}, \theta_{3, 4}]$. Applying the same grid for these two vines may give different results due to the conditioning order from the vine structure. For example, if the pair $X_3$-$X_4$ is very influential on minimizing the output quantile, it would be more difficult to find a minimum with $\cV_1$ than $\cV_2$ due to the conditioning of the pair with $X_1$ and $X_2$ in $\cV_1$. However, if the grid is thin enough, the minimum from these two vines should be equivalent. 

To counter this difficulty, one possible option consists in randomly permuting the indexes of the variables and repeating the algorithm several times to visit different vines structures.

\subsubsection{Computational cost}

For one given R-vine structure and one fixed copula family at each pair, the overall cost of the method is equal to $nN$. However, as explained in $\S$ \ref{subsec:2:consistency}, the finite grid $\bTheta_N$, should be thin enough to reasonably explore $\bTheta$. Therefore, $N$ should increase with the number of dimensions $d$ and more specifically with the number of pairs $p = {{d}\choose{2}}$. A natural form for $N$ would be to write it as $N = \gamma ^ p$, where $\gamma \in \RR^{+}$. Thus, the overall cost of the exhaustive grid-search would be equal to $n \gamma^{{d}\choose{2}}$. The cost is in $\cO(\gamma^{d^2})$ which makes the method hardly scalable when the dimension $d$ increases.

\subsection{Iterative search for a penalizing R-vine structure: a greedy heuristic based on pairwise copula}
\label{subsec:4:estimation_iterative}

\subsubsection{Going further in quantile minimization}

With Algorithm \ref{algo:4:grid_search}, the previous subsection proposes an exhaustive grid-search strategy to determine a R-vine copula $C_{\tilde\theta}$ such that the associated output quantile $G_{\tilde\btheta}^{-1} (\alpha)$ be the smallest (and also the most conservative in a structural reliability context). This approach remains however limited in practice since $C_{\tilde\theta}$ for fixed pair-copula families (e.g., Archimedean or max-stable copulas) and $\cV$ which is a member of the set ${\cal{F}}_d$ of all the possible $d-$dimensional R-vine structure. Intuitively, a more reliable approach to quantile minimization should be based on mixing this estimation method with a selection among all members of the finite set ${\cal{F}}_d$, as well for the copula families. It is indeed likely that searching within an associative class of copulas like Archimedean ones, allowing modeling dependence in arbitrarily high dimensions, be a too rigid choice for estimating the minimum $G_{\tilde\btheta}^{-1} (\alpha)$.

A minimum quantile can probably be found using a R-vine structure defined by conditional pairwise sub-copulas (according to \eqref{eq:4:PCC_exemple_4}) that are not part of the same rigid structure. However, a brute force exploration of ${\cal{F}}_d$ would be conducted at an exponential cost increasing with $d$ \cite{morales2011counting}. If we also consider the large computational cost of an exhaustive grid-search for a large number of dependent variables (as explained in $\S$ \ref{subsec:4:estimation_fixed}), this approach is not feasible in practice for high dimensions.

For this reason, it is proposed to extend Algorithm \ref{algo:4:grid_search} by a greedy {\it heuristic} that dynamically selects the most influential correlations between variables while limiting the search to pairwise correlations. Doing so, minimizing the output quantile can be conducted in a reasonable computational time. Therefore the selected $d-$dimensional vine structure would be filled with independent pair-copulas except for the pairs that are influential on the minimization.

This working limitation, interpreted as a sparsity constraint, is based on the following assumption: it is hypothesized that only few pairs of variables have real influences on the minimization. It is close in spirit to the main assumption of global sensitivity analysis applied to computer models, according to which only a limited number of random variables has a major impact on the output \cite{saltelli2000sensitivity, iooss2015review}.
 
\subsubsection{General principle}

The method basically relies on an iterative algorithm exploring pairwise correlations between the uniform random variables $U_j=F^{-1}_j(X_j)$ and progressively building a non-trivial R-vine structure, adding one pair of variable to the structure at each iteration. Starting at step $k=0$ from the simple independent copula 
$$
	C_{\btheta^{(0)}}(u_1,\ldots,u_d) = \prod_{j=1}^d u_j,
$$
the algorithm finally stops at a given step $k=K$ while proposing a new copula $C_{\btheta^{(K)}}$ associated to a R-vine structure $\cV_K$ mostly composed of independent pair-copulas.

At each iteration $k$, we denote by $\Omega_k$ the \textit{selected pairs} which are considered non-trivial (non-independent) due to their influence on the quantile minimization. Let $\Omega_{-k} = \Omega \backslash \Omega_k$ be the \textit{candidate pairs}, which were not the remaining pairs, which influence on the minimization is still to be tested and are still considered independent. We also consider $\mathbf B$ as a set of candidate copula families. The pseudo-code of Algorithm \ref{algo:4:iterative_algorithm} shows in detail how this iterative exploration and building is conducted. More algorithms in Appendix \ref{subsec:appx:vine_construction} described how to construct a vine structure with a given list of indexed pairs of variable.

\begin{algorithm}
\vspace{.4em}
\textbf{Initialization: \\}

Iteration: $k = 0$\;
Selected pairs: $\Omega_0 = \emptyset$\;
Selected families: $\mathbf B_0 = \emptyset$\;

\vspace{.4em}
\While{$k \leq K$}{
	Copula parameter space of the selected pairs: $\bTheta_k = \prod_{(i, j) \in \Omega_k} \Theta_{i, j}$\;
	\textbf{1.} Explore the set of candidate pairs $\Omega_{-k}$\;
	\For{$(i, j) \in \Omega_{-k}$}{
		\textbf{a.} Create a vine structure $\cV_{(i, j)}$ using the procedure of Section \ref{subsec:appx:vine_construction} applied to the list $\Omega_k \cup (i, j)$\;
		\textbf{b.} Explore the set of candidate families $\bB$\;
		\For{ $ \cB \in \bB$} {
			Apply Algorithm \ref{algo:4:grid_search} with the pair-copula families $\cB \cup \bB_k$\;
			\begin{enumerate}[(i)]
			\item Define a $(k+1)-$dimensional grid $\bm \Delta_{i,j}$  of $\bTheta_k \times \Theta_{i, j}$ with cardinality $N_k$;
			\item Select the minimum over the grid $\bm \Delta_{i,j}$:
			\begin{eqnarray*}
				\htheta_{\cB} & = & \argmin_{\btheta_{\cB} \in \bm \Delta_{i,j}}\left\{ \widehat{G}_{\btheta_{\cB}}^{-1} (\alpha)\right\}.
			\end{eqnarray*}
			\end{enumerate}
		} 
		\textbf{c.} Select the minimum among $\mathbf B$
		\begin{eqnarray*}
			\cB_{i, j} & = & \argmin_{\cB \in \mathbf B} \left\{ \widehat{G}_{\htheta_{\cB}}^{-1} (\alpha)\right\} \\
			\htheta_{i,j} & = & \htheta_{\cB_{i, j}}
		\end{eqnarray*}
	} 
	\textbf{2.} Select the minimum among $\Omega_{-k}$
	\begin{eqnarray*}
		(i,j)^{(k)} & = & \argmin_{(i, j) \in \Omega_{-k}}\left\{ \widehat{G}_{\htheta_{i,j}}^{-1} (\alpha)\right\}, \\
		\cV^{(k)} & = & \cV_{(i, j)^{(k)}}, \\
		\htheta^{(k)} & = & \htheta_{(i,j)^{(k)}} \\
		\cB^{(k)} & = & \cB_{(i,j)^{(k)}}
	\end{eqnarray*}

	\textbf{3.} Check the stopping condition\;

	\If{$\widehat{G}_{\htheta^{(k)}}^{-1} (\alpha) \geq \widehat{G}_{\htheta^{(k-1)}}^{-1} (\alpha)$}{
		$K = k-1$\;
	}
	\Else{
		Extend the list of selected pairs: $\Omega_k = \Omega_k \cup (i, j)^{(k)}$ and families: $\bB_k = \bB_k \cup \cB^{(k)}$ \;

		\If{ $k < K$ \textbf{and} computational budget not reached}{
			New iteration: $k = k + 1$\;
		}
	}
} 
\caption{Minimization of the output quantile and estimation of $\btheta^{(K)}$ over an increasing family of R-vine structures.}
\label{algo:4:iterative_algorithm}
\end{algorithm}

\subsubsection{Example}
\label{subsubsec:4:example_iterative}

Consider the four-dimensional $(d=4)$ situation such as $\bX = (X_1, \ldots, X_4)$ where, for to the sake of simplicity, all marginal distributions of $\bX$ are assumed to be uniform on $[0,1]$. We consider a simple additive model described by
\begin{equation}
	\eta(\bX) = 30 X_1 + 10 X_3 + 100 X_4.
	\label{eq:4:example}
\end{equation}
For an additive model and uniform margins, the output quantile is monotonic with the dependence parameters (see Section \ref{subsec:3:monotonicity}) which locates the minimum quantile at the edge of $\bTheta$. Thus, Step 1.b. of Algorithm \ref{algo:4:iterative_algorithm} is simplified by considering only \FH copulas in the exploration.

In this illustration we consider $\alpha=0.1$ and we select $n=300,000$ large enough in order to have a great quantile estimation and the algorithm stops at $K=3$. Figure \ref{fig:4:example_iterative} shows, for each iteration $k$, the $p-k$ vine structures that have been created by the algorithm. The red nodes and edges are the candidate pairs $(i, j) \in \Omega_{-k}$ and the blue nodes and edges are the selected pairs $\Omega_k$. At iteration $k=0$, the selected pair is $(1, 4)$ with an estimated minimum quantile of $-52.18$. At iteration $k=1$, the second selected pair is $(3, 4)$ with an estimated minimum quantile of $-56.03$. At iteration $k=2$, the third selected pair is $(2, 4)$ with an estimated minimum quantile of $-56.23$.
\begin{figure}
	\centering
    \makebox[\textwidth][c]{\includegraphics[width=1.35\textwidth]{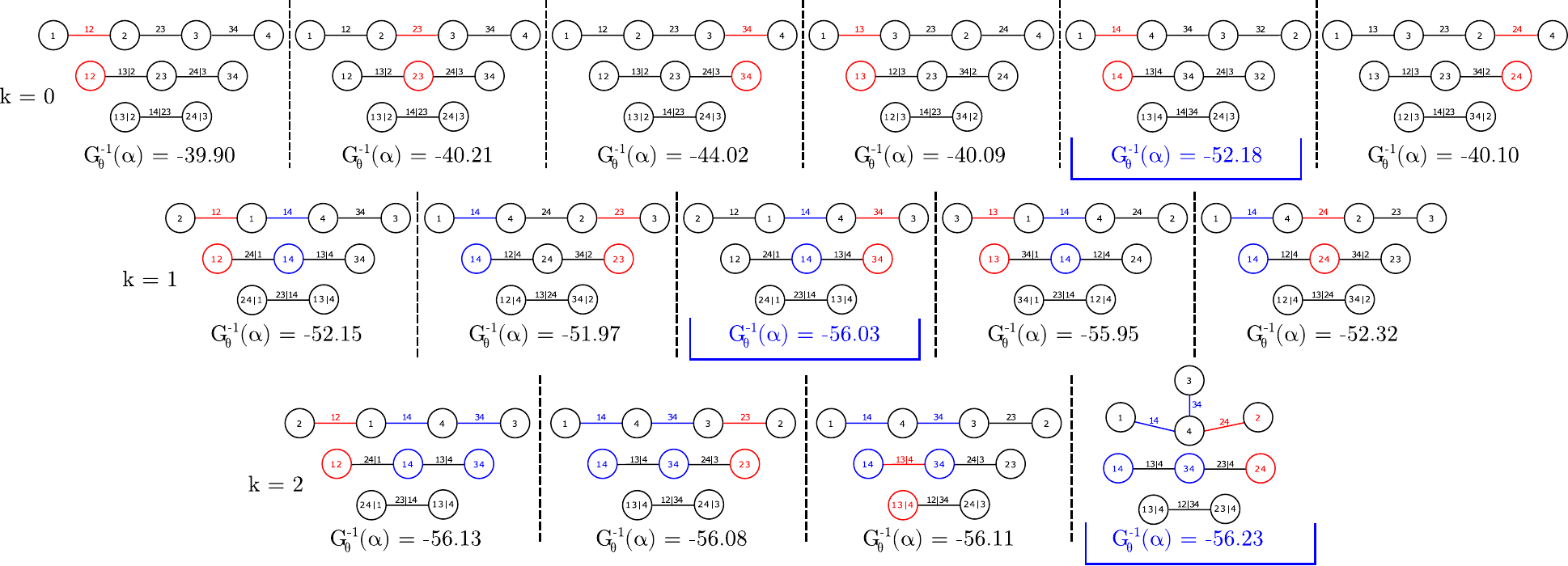}}
	\caption{Illustration of the vine structures created during the 3 iterations of the algorithm for the example of Section \ref{subsubsec:4:example_iterative}. The candidate and selected pairs are respectively represented in red and blue. The quantile associated to the selected pair of each iteration is written in blue.}
	\label{fig:4:example_iterative}
\end{figure}

We observe that $X_4$ appears in all the selected pairs. This is not surprising since $X_4$ is the most influential variable with the largest coefficient in \eqref{eq:4:example}. The algorithm considers D-vines by default, but this is important for the first iterations since most of the pairs are independent. When it is possible, the algorithm creates a vine such as the selected pairs and the candidate pair are in the first trees. For example, the fourth vine at iteration $k=2$ with the candidate pair $(2, 4)$ shows a R-vine structure that respects the ranking of the listed pairs. However, the third vine at iteration $k=2$ for the candidate pair $(1, 3)$ along with the selected pairs $\{ (1, 4), (3, 4)$ could respect the ranking and set all the pairs in the first tree altogether. Thus, using Algorithm \ref{algo:appx:vine_construction} in Appendix, a valid vine structure is determined by placing the candidate pair $(1, 3)$ in the next tree.

\subsubsection{Computational cost}
\label{subsubsec:4:comput_cost}

The number of model evaluations is influenced by several characteristics from the probabilistic model and from the algorithm. Let $|\bB|$ be the number of family candidates. The total number of runs is 
\begin{equation}
	N = |\bB| \frac{n}{2} \sum_{k=0}^{K} N_k \times ( d (d-1)- 2k).
\end{equation}
The sum corresponds to the necessary iterations to determine the influential pairs. The maximum possible cost is if all the pairs are equivalently influential (i.e., $K = p = d (d-1) / 2$), which would be extremely high. The term $n N_k$ is the cost from the grid-search quantile minimization at step \textbf{2.} of the algorithm. The greater $N_k$ is and the better the exploration of $\bTheta_k \cup \Theta_{i, j}$. Because the dimension of $\bTheta_k$ increases at each iteration $k$, it is normal that $N_k$ should also increases with $k$ (e.g. $N_k = \gamma^{ \beta k}$, where $\gamma$ and $\beta$ are constants). Also, the greater $n$ is and the better the quantile estimations. The second term is the cost from the input dimension $d$ which influences the number of candidate pairs $\Omega_{-k}$ at each iteration $k$.

Extensions can be implemented to reduce the computational cost such as removing from $\Omega$, the pairs that are not sufficiently improving the minimization.
\section{Applications}
\label{sec:applications}

The previously proposed methodology is applied to a toy example and a real industrial case-study. It is worth to mention that these experiments (and future ones) can be conducted again using the Python library \texttt{dep-impact} \cite{depimpact}, in which are encoded all the procedures of estimation and optimization presented here.

\subsection{Numerical example}

We pursue and extend the portfolio example considered in Section \ref{subsec:3:monotonicity} and illustrated on Figure \ref{fig:3:example_bidim_2}. The numerical model $\eta$ is now defined by the weighted sum
\begin{equation}
	Y = \eta(\bX) =  - \bbeta \bX^T = - \sum_{j=1}^d \beta_j X_j,
\end{equation}
where the $\bbeta = (\beta_1, \dots, \beta_d)$ is a vector of constant weights. The margins of the random vector $\bX$ follow the same generalized Pareto distribution with scale $\sigma$ and shape parameter $\xi$. Note that the bivariate example in Section \ref{subsec:3:monotonicity} considered $\beta = 1$ and the distribution parameters as $\sigma=1$ and $\xi=1$. In the following examples, we aim at minimizing the median ($\alpha = 0.5$) of the output distribution. We chose to fix the marginal distribution's parameters at $\sigma=10$ and $\xi=0.75$, and we set the constant vector $\bbeta$ to a base-10 increasing sequence such that $\bbeta = (10^{1/d}, 10^{2/d}, \dots, 10)$. This choice of weights aims to give more influence to the latest components of $\bX$ on $Y$. Thus, some pairs of variables should be more important in the minimization of the output quantile, as required by the sparsity constraint. We also took $n$ large enough to estimate the output quantile with high precision (i.e. $n=300,000$).

For all these experiments the results from the different methods can be compared.
\begin{itemize}
\item Method 1: the grid-search approach with an optimized LHS sampling \cite{mckay1979comparison} inside $\bTheta$ and a random vine structure,
\item Method 2: the iterative algorithm with an increasing grid-size of $N_k = 25^*(k+1)^2$.
\end{itemize}

The Method 1 is established with the same computational budget as Method 2.

\subsubsection{Dimension 3}

In a three dimensional problem, only three pairs of variables ($p=3$) are involved in the dependence structure. The sampling size of $\bTheta$ in Method 1 is set to $400$, which is great enough to explore a three dimensional space. The results are displayed on Figure \ref{fig:5:toy_example_dim_3_matrix_plot}: the estimated quantiles from Method 1 (blue dots) with a convex hull (blue dot line) and the quantile at independence (dark point) are provided. It also highlights the minimum estimated quantiles from Methods 1 and 2 which are respectively represented in blue and red points. We also show in green point, the minimum quantile by considering only the \FH bounds. For each minimum, the 95 \% bootstrap confidence intervals is displayed in dot lines.
\begin{figure}
	\centering
	\includegraphics[width=0.95\textwidth]{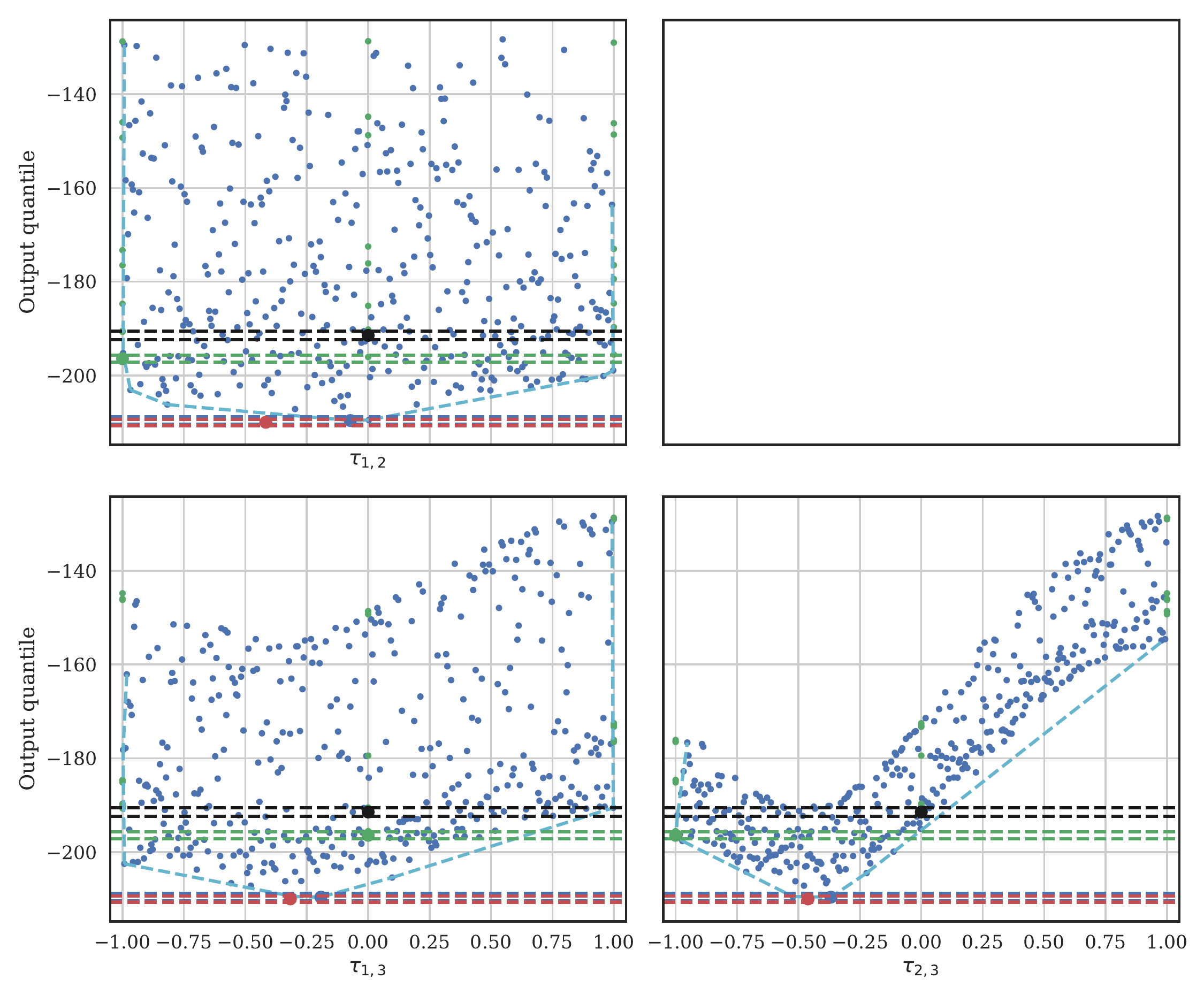}
	\caption{Matrix plot of the output median in function of the Kendall coefficient of each pair. The blue dots represents the estimated quantiles of Method 1. The black point is the quantile at independence and the minimum of Method 1 and 2 are the red and blue points, which are equivalent here. The green point is the the minimum with only \FH bounds. The 95 \% bootstrap confidence intervals are displayed in dot lines.}
	\label{fig:5:toy_example_dim_3_matrix_plot}
\end{figure}

This low dimensional problem confirms the non-monotonic form of the quantile with the dependence parameter, in particular for the variation of the quantile in function of $\tau_{2, 3}$. As expected, the pair $X_2$-$X_3$ is more influential on the output quantile due to the large weights on $X_2$ and $X_3$. The minimum values obtained by each method are still lower that the results given by an independent configuration. The minimum using \FH bounds is also provided to show that the minimum is not at the boundary of $\bTheta$. Method 1 and 2 have very similar minimum results.

\subsubsection{Dimension 10}

To illustrate the advantages of the iterative procedure, we now consider $d=10$. In this example, we chose to only consider a Gaussian family for the set of pair-copula family candidates. The sampling size for the exploration of $\bTheta$ in Method 1 is set to $6,000$. Experimental results are summarized over Figure \ref{fig:5:iterative_dim_10}, by displaying the minimum quantiles in function of the iteration $k$ of Method 2. The quantile at independence is shown in dark line, the minimum estimated quantile from Method 1 is shown in blue line and the other lines are the minimum quantiles at each iteration of the algorithm, all with their 95\% bootstrap confidence interval. We display at each iteration the minimum quantiles of each candidate pair in small dots.
\begin{figure}
	\centering
	\includegraphics[width=0.9\textwidth]{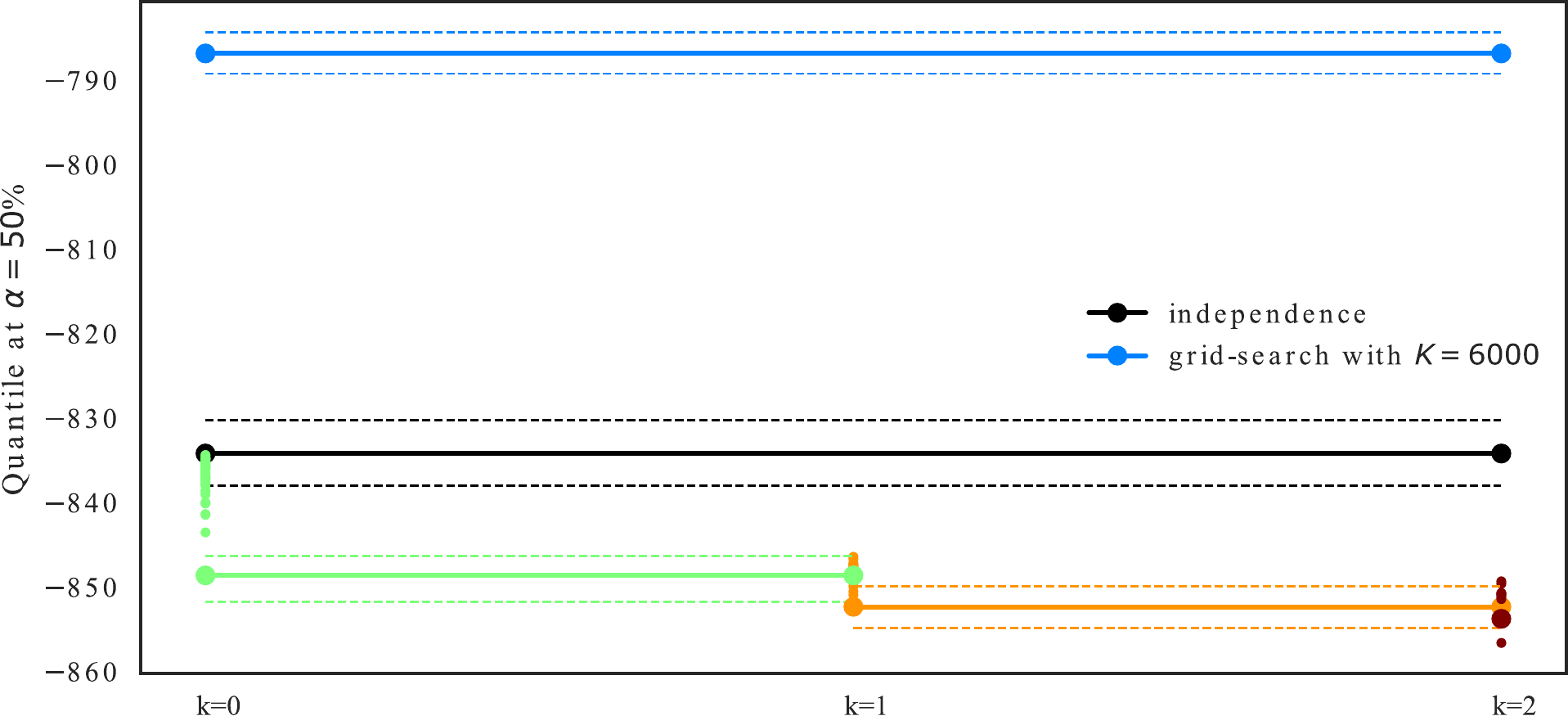}
	\caption{Minimum quantile results with the iteration $k$ of the iterative procedure. The quantile at independence is shown in dark line. The minimum quantiles from Method 1 is show in blue lines. The other lines and dots colors are the results from Method 2. For each iteration, the small dots are the estimated quantiles of all candidates and the point is the minimum. The 95 \% bootstrap confidence intervals are also displayed for the independence and each minimums.}
	\label{fig:5:iterative_dim_10}
\end{figure}

The minimum result from Method 1 is even higher than the quantile at independence. This is due to the very large number of pairs ($p=45$) that makes the exploration of $\bTheta$ extremely difficult. On the other hand, Method 2 (iterative algorithm) is definitely better and significantly decreases the quantile value even at the first iteration (for only one dependent pair). The results are slightly improved with the iterations. We observe at the last iteration that the results from the candidate pairs are slightly higher than the minimum from the previous iteration. It is due to the choice of $N_k$ which does not increases enough with the iterations to correctly explore $\bTheta_k$, which also increases with the iterations.

\subsubsection{Using multiple pair-copula family candidates}

To show the importance of testing multiple copula families, we consider $d=6$ and three tests of Method 2 (iterative procedure). The Figure \ref{fig:5:iterative_multi_k_3} shows the minimum from the iterative results using three sets of family candidates: a set of Gaussian and Clayton in red ($\bB^1=\{G, C\}$), Gaussian only in green ($\bB^2=\{G\}$), and Clayton only in yellow ($\bB^3=\{C\}$). We also display below the iteration number, the selected family for $\bB^1$.
\begin{figure}
	\centering
	\includegraphics[width=0.95\textwidth]{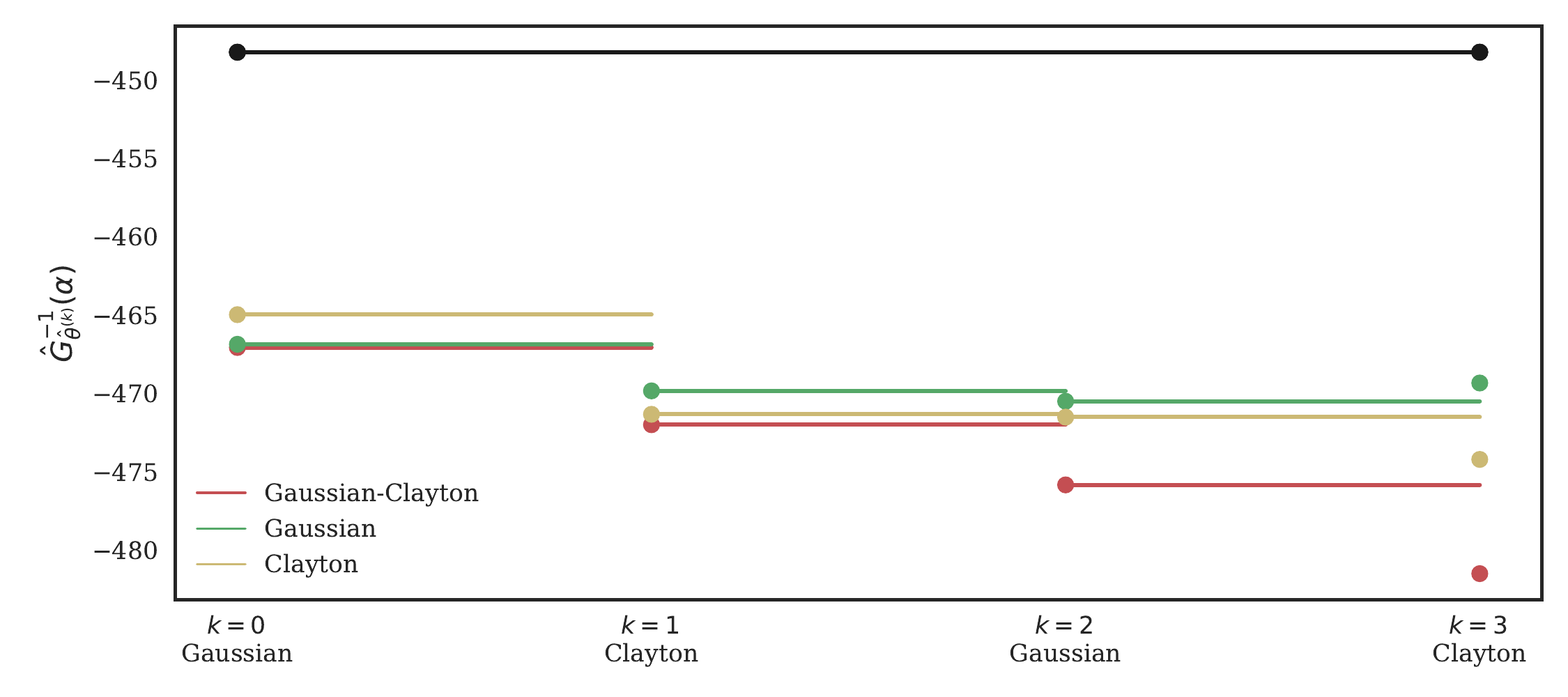}
	\caption{Quantile minimization for different set of family candidates $\bB$. The dark line shows the quantile at independence. The minimum at each iteration for the family candidates sets $\bB^1$, $\bB^2$ and $\bB^3$ respectively in red, green and yellow.}
	\label{fig:5:iterative_multi_k_3}
\end{figure}

At iteration $k=0$, the algorithm with the set $\bB^1$ has selected the Gaussian copula as the selected pair and the result is as expected equivalent as for the set $\bB^2$. At next iteration, a Clayton copula has been selected for algorithm with the set $\bB^1$, which slightly improves the minimization compared to the others. The improvement start at iteration $k=2$ where the Algorithm with the set $\bB^1$ minimizes more the output quantile than the other sets with only one copula family. At the last iteration, the algorithm with set $\bB^1$ selected a mix between Gaussian an Clayton families. This diversity seems to lead to better results than using only one family for every pairs. Testing multiple families is an interesting feature of the algorithm and is something that cannot be feasible for the grid-search approach. However, the cost for $\bB^1$ is twice larger than for the other methods.

\subsection{Industrial Application}

\subsubsection{Context}

We consider an industrial component belonging to a production unit. This component must maintain its integrity even in case of an accidental situation. For this reason, it is subject to a justification procedure by regulation authorities, in order to demonstrate its ability to withstand severe operating conditions. This undesirable event consists in the concomitance of three different factors:
\begin{itemize}
	\item the potential existence of small and undetectable manufacturing defects ;
	\item the exposition of the structure to an ageing phenomenon harming the material which progressively diminishes its mechanical resistance throughout its lifespan ;
	\item the occurrence of an accidental event generating severe constraints on the structure.
\end{itemize}
If combined, these three factors might lead to the initiation of a crack within the structure. Since no failure was observed until now, a structural reliability study should be conducted to check  the safety of the structure. To do so a thermal-mechanical code $\eta: \RR^d \rightarrow \RR^{+}$ was used, which calculates the ratio between the resistance and the stress acting on the component during a simulated accident. The numerical model depends on parameters affected by uncertainties quantified throughout numerous mechanical tests. Nevertheless, these experiments are mostly established individually and only few experiments involves simultaneously two parameters.
	
\subsubsection{Probabilistic model}

For this problem, we introduce $d=6$ random variables with predefined marginal distributions $(P_j)_{j=1\dots d}$. The dependence structure is however unknown. From the 15 pairs of variables, only the dependencies of two pairs are known: one is independent and the other follows a Gumbel copula with parameter 2.27. Therefore, we consider $p=13$ pairs of variables with unknown dependencies.

Given expert feedbacks, we restricted the exploration space $\bTheta$ by defining bounds for each pair of variables $(i, j) \in \Omega$ such that
$$
T_{c_{i, j}}(\tau_{i, j}^-) \leq \theta_{i, j} \leq T_{c_{i, j}}(\tau_{i, j}^+),
$$
where $\tau_{i, j}^-$ and $\tau_{i, j}^+$ are respectively the upper and lower kendall's correlation coefficient bounds for the dependence of the pair $(i, j)$ and $T_{c_{i, j}}$ is the transformation from Kendall's tau value to the copula parameter for the associated copula $c_{i, j}$. This choice enables to explore only realistic dependence structures. For these experiments we only considered Gaussian copulas.

\subsubsection{Results}

We consider the quantile at $\alpha=0.01$ as a quantity of interest. A first experiment is established with the incomplete probability structure: only the two pairs with a known dependence structure and all others at independence. Two other experiments are established: an exhaustive grid-search approach with a given vine structure and an iterative procedure with a maximum budget equivalent to the grid-search. A grid-size of $1000$ is chosen with $n=20,000$.

The results are displayed in Figure \ref{fig:5:result_application}. and has the same description as Figure \ref{fig:5:iterative_dim_10}. The quantile for the incomplete probability structure is approximately at 1.8. The grid-search and the iterative approaches found dependence structures leading to output quantile values close to 1.2 and 1.1 respectively. The minimum quantile from the iterative procedure is slightly lower than the grid-search approach. The problem dimension is not big enough to create make a significant difference between the methods. However, the resulting dependence structure from the iterative method is greatly simplified with only four pairs of variables, in addition to the already known pair.
\begin{figure}
	\centering
	\includegraphics[width=0.9\textwidth]{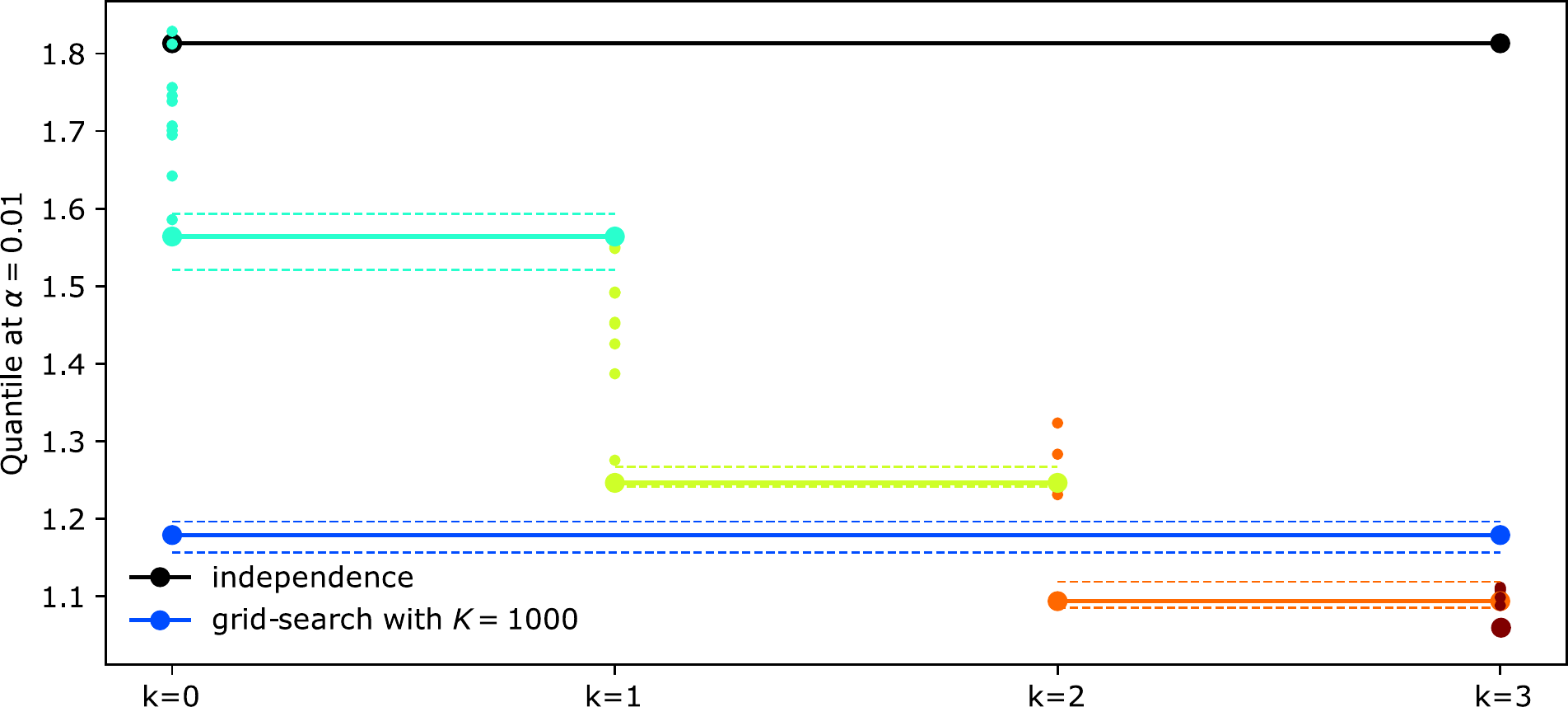}
	\caption{Minimization of the output quantile using a grid-search and the iterative procedure for $\alpha = 1\%$. The description is the same as in Figure \ref{fig:5:iterative_dim_10}.}
	\label{fig:5:result_application}
\end{figure}

This result highlights the risk of having an incomplete dependence structure in a reliability problem. In this application, the critical limit ({\it safety margin}) of the considered industrial component is 1. With the incomplete distribution of $\bX$, the output quantile is very high compared to the critical limit and states a high reliability of the component. Unfortunately, if we consider worst-case dependence structures, the output quantile is significantly minimized and becomes closer to the critical limit. Thus, if the true dependence structure is close to the obtained worst case dependence structure, the risk of over estimating the output quantile can be important.
\section{Conclusion and discussion}
\label{sec:conclusion}


Incomplete information on inputs is an issue frequently encountered in structural reliability. Because safety analyses are mostly based on propagating random uncertainties through black-box computer models, the selection of a conservative dependence structure between input components appears as a requirement to define probabilistic worst cases. This article takes a first step towards such a methodology, by proposing a greedy, heuristic algorithm that explores a set of possible dependencies, taking advantage of the pair-copula construction (PCC) of multivariate probability distributions. Results of experiments conducted on toy and a real models illustrate the good behavior of the procedure: in situations where the monotonicity of the considered risk indicator (the output quantile) with respect to the inputs is postulated, a minimum value for the risk indicator is obtained using \FH bounds. Nonetheless, it is possible to exhibit situations where the algorithm detect other and more conservative dependence structures. This first step required a number of hypotheses and approximations that pave the way for future research. Besides, some perspectives arise from additional technical results. 

It would be interesting to improve the statistical estimation of the minimum quantile, given a dependence structure, by checking the hypotheses underlying the convergence results of Theorem \ref{theor:2:consistency_extremum_estimator}. Checking and relaxing these hypotheses should be conducted in relation with expert knowledge on the computer model $\eta$ and, possibly, a numerical exploration of its regularity properties. The grid search estimation strategy promoted in Section \ref{subsec:2:grid-search} arises from the lack of information about the convexity and the gradient of $\btheta\to \Gtheta ^{-1} (\alpha)$. However, the method remains basic and stochastic recursive algorithms, such as the Robbins-Munro algorithm \cite{robbins1951stochastic}, can be proposed and tested as possibly more powerful (faster) alternatives.

A significant issue, is the computational cost of the exploration of possible dependence structures. Reducing this cost while increasing the completeness of this exploration should be a main concern of future works. Guiding the exploration in the space of conditional bivariate copulas using enriching criteria and possible expert knowledge can facilitate the minimization. The Algorithm \ref{algo:4:iterative_algorithm} can also be improved using nonparametric bootstrap. This would quantify the estimation quality of the selected minimum quantile of each iteration. Note however that a seducing feature of an iterative procedure is the a priori possibility of its adaptation to situations where the computational model $\eta$ is time-consuming. In such cases, it is likely that Bayesian global optimization methods based on replacing the computer model by a surrogate model (e.g., a kriging-based meta-model) \cite{Osborne2009} should be explored, keeping in mind that nontrivial conservative correlations -- losses of quantile monotonicity-- can be due to edge effects (e.g., discontinuities) characterizing the computational model itself.

We noticed in our experiments on real case-studies that expert knowledge remains difficult to incorporate otherwise that using association and concordance measures, mainly since we are lacking of representation tools (e.g., visual) of the properties of multivariate laws that provide intelligible diagnostics. A first step towards the efficient incorporation of expert knowledge could be to automatize the visualization of the obtained vine structures, to simplify judgements about their realism.

Finally, another approach to consider could be to address the optimization problem \eqref{eq:2:general_problem} within the more general framework of optimal transport theory, and to take advantage of the many ongoing works in this area. Indeed, the problem \eqref{eq:2:general_problem} can be seen as a multi-marginal optimal transport problem (see \cite{refId0} for an overview). When $d = 2$, it corresponds respectively to the classical optimal transport problems of Monge and Kantorovich \cite{villani2008optimal}. However, the multimarginal theory is not as well understood as for the bimarginal case, and developing efficient algorithms for solving this problem remains also a challenging issue \cite{refId0}.

\section*{Acknowledgements}

The authors express their thanks to Gerard Biau (Sorbonne Universit\'e), Bertrand Iooss and Roman Sueur (EDF R{\&}D) for their useful comments and advice during the writing of this article.

\bibliographystyle{abbrv}
\bibliography{bibliography}

\appendix

\section{Proof of the consistency result}
\label{sec:proof_consitency}

The consistency of the estimator $\hat{\btheta}$ requires some regularity of the function $\btheta \mapsto \Gtheta$. This regularity can be also expressed in term of modulus of increase of the function $\btheta \mapsto \Qta$, on which some useful definitions and connections with the modulus of continuity are reminded.

\subsection{Modulus of increase of a cumulative distribution function}

Let us recall that a modulus of continuity is any real-extended valued function $\omega : [0, \infty) \mapsto [0, \infty) $ such that
$\lim_{\delta \rightarrow 0}  \omega(x) = \omega(0) = 0$.
The function $f : \RR \mapsto \RR $ admits  $\omega$ as modulus of continuity if for any $ (x,x') \in \RR^2$,
\begin{equation*}
	|f(x) -  f(x')| \leq \omega(|x - x'|).
\end{equation*}
Similarly, for some $x \in \RR$, the function $f$ admits $\omega$ as a {\it local} modulus of continuity if for any $ x' \in \RR^2$,
\begin{equation*}
	|f(x) -  f(x')| \leq \omega  (|x - x'|).
\end{equation*}
To control the deviation of the empirical quantile in the proof of Proposition \ref{prop:appx:uniform_convergence_DKn} further, we consider the modulus of continuity of the quantile functions $G^{-1}: [0, 1] \rightarrow \RR$ where $G$ is a distribution function on $\RR$. The quantile function being an increasing function, the {\it exact local modulus of continuity} of the quantile function $G^{-1}$ at $\alpha \in (0, 1)$ can be defined as
\begin{equation*}
	\label{eq:appx:local_modulus_continuity}
	\omega_{G^{-1}} (\epsilon, \alpha) := \max \left( G^{-1}(\alpha + \epsilon) - G^{-1}(\alpha), G^{-1}(\alpha) - G^{-1}(\alpha - \epsilon) \right).
\end{equation*}
In the proof of Proposition \ref{prop:appx:uniform_convergence_DKn}, we note that the continuity of a quantile function $G^{-1}$ can be connected to the increase of the distribution function $G$ (see also for instance Section A in~\cite{Bobkov14}).
Using the fact that the distribution function is increasing, we introduce the {\it local modulus of increase} of the distribution function $\epsilon_G$ at $y = G^{-1}(\alpha) \in \RR$ as:
\begin{equation*}
	\label{eq:appx:local_modulus_increase}
	\epsilon_G (\delta, y): = \min \left( G(y + \delta) - G(y), G(y) - G(y - \delta)\right).
\end{equation*}

\subsection{Proofs}

The estimator $\hat{\btheta}$ defined in \eqref{eq:2:estim} is an extremum-estimator (see for instance Section 2.1 of \cite{newey1994large}). The main ingredient to prove the consistency of this estimator is the uniform convergence in probability of the families of the  empirical quantiles $( \Qtha)_{\btheta \in \cD_{K_n}}  $ over  the family of  grids $ \cD_{K_n}$.

\begin{prop}
	\label{prop:appx:uniform_convergence_DKn}
	Let $\cD_{K_n}$ be defined as in Theorem~\ref{theor:2:consistency_extremum_estimator}. Let assume that \ref{hyp:2:continuous_distribution_D} and \ref{hyp:2:increasing_distribution_D} are both satisfied. Then, for all $\epsilon > 0$,
	$$
		\PP [ \sup_{\btheta \in \cD_{K_n}} | \Qtha - \Qta| > \epsilon ]\xrightarrow{n \rightarrow \infty} 0 .
	$$
\end{prop}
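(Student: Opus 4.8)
The plan is to convert the uniform control of the empirical quantiles over the finite grid $\cD_{K_n}$ into a uniform control of the empirical cumulative distribution functions, for which the Dvoretzky--Kiefer--Wolfowitz (DKW) inequality \cite{Dvoretzky56} gives an exponential tail bound, and then to absorb the grid by a crude union bound. This is affordable precisely because $\cD_{K_n}$ is finite with cardinality $K_n \lesssim n^{\beta}$: the polynomial growth of $K_n$ is dominated by the exponential decay coming from DKW. Hence the real content of the proof is a deterministic, $\btheta$-by-$\btheta$ estimate translating a deviation of the empirical quantile $\Qtha$ from $\Qta$ into a sup-deviation of the empirical CDF $\Gthetan$ from $\Gtheta$, with constants that do not depend on $\btheta$.

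First I would establish, for a fixed $\btheta \in \cD$ and a fixed $\epsilon > 0$, the deterministic event inclusion
\[
\bigl\{\, |\Qtha - \Qta| > \epsilon \,\bigr\} \ \subseteq\ \Bigl\{\, \sup_{y \in \RR} | \Gthetan(y) - \Gtheta(y)| \ \geq\ \underline{\epsilon}_{\bTheta}(\epsilon) \,\Bigr\},
\]
where $\underline{\epsilon}_{\bTheta}$ is the positive lower bound on the modulus of increase of $\Gtheta$ at $\Qta$ provided by Assumption~\ref{hyp:2:increasing_distribution_D}; this is just the quantitative form of the remark that a lower bound on the increase of $\Gtheta$ yields a modulus of continuity for the quantile $\Qt$. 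If $\Qtha > \Qta + \epsilon$, then by definition of the generalized inverse $\Gthetan(\Qta + \epsilon) < \alpha$, whereas continuity of $\Gtheta$ (Assumption~\ref{hyp:2:continuous_distribution_D}) gives $\Gtheta(\Qta) = \alpha$ and the modulus-of-increase bound gives $\Gtheta(\Qta + \epsilon) \geq \Gtheta(\Qta) + \epsilon_{\Gtheta}(\epsilon,\Qta) \geq \alpha + \underline{\epsilon}_{\bTheta}(\epsilon)$, so that $\Gtheta(\Qta + \epsilon) - \Gthetan(\Qta + \epsilon) > \underline{\epsilon}_{\bTheta}(\epsilon)$. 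Symmetrically, if $\Qtha < \Qta - \epsilon$, right-continuity of $\Gthetan$ forces $\Gthetan(\Qta - \epsilon) \geq \alpha$ while $\Gtheta(\Qta - \epsilon) \leq \alpha - \underline{\epsilon}_{\bTheta}(\epsilon)$, giving a gap of at least $\underline{\epsilon}_{\bTheta}(\epsilon)$ again. This is the crux; the rest is bookkeeping.

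Then I would conclude by a union bound over the grid followed by DKW applied to each fixed $\btheta$:
\begin{align*}
\PP\!\left[ \sup_{\btheta \in \cD_{K_n}} | \Qtha - \Qta| > \epsilon \right] &\ \leq\ \sum_{\btheta \in \cD_{K_n}} \PP\!\left[ \sup_{y} | \Gthetan(y) - \Gtheta(y)| \geq \underline{\epsilon}_{\bTheta}(\epsilon) \right] \\
&\ \leq\ 2\, K_n\, \exp\!\bigl( -2 n\, \underline{\epsilon}_{\bTheta}(\epsilon)^{2} \bigr),
\end{align*}
and since $K_n \lesssim n^{\beta}$ this upper bound tends to $0$, exponential decay beating polynomial growth, which proves the claim. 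The main obstacle — and the reason Assumption~\ref{hyp:2:increasing_distribution_D} is phrased with a lower bound $\underline{\epsilon}_{\bTheta}$ that is independent of $\btheta$ — is precisely the uniformity: the DKW exponent $\underline{\epsilon}_{\bTheta}(\epsilon)^2$ must be bounded away from $0$ over all $\btheta \in \cD$, for otherwise the per-point estimate degenerates somewhere on the grid and the union bound is worthless. A secondary, purely technical point is to handle cleanly the interplay between the $\inf$-convention in the quantile, the continuity of $\Gtheta$, and the right-continuity of $\Gthetan$, which is what makes the two one-sided implications in the inclusion above exact.
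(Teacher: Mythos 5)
Your proof is correct, and it follows the same overall strategy as the paper — a union bound over the finite grid $\cD_{K_n}$, an exponential inequality of DKW type applied pointwise in $\btheta$, and the uniform lower bound $\underline{\epsilon}_{\bTheta}$ from Assumption \ref{hyp:2:increasing_distribution_D} to keep the exponent bounded away from zero over the grid so that $K_n \lesssim n^{\beta}$ is harmless — but your middle step is implemented differently and more directly. The paper does not argue through a deterministic event inclusion on the empirical CDF: it represents the empirical quantile via the uniform quantile process, $\Gthetan^{-1}(\alpha) \stackrel{\mathcal L}{=} \Gtheta^{-1}(\UU_n^{-1}(\alpha))$, bounds the deviation by the local modulus of continuity of $\Gtheta^{-1}$ at $\alpha$, proves an implication converting a large modulus of continuity of the quantile into a small modulus of increase of the CDF, and then invokes a DKW-type bound for the uniform empirical \emph{quantile} process $|\UU_n^{-1}(\alpha)-\alpha|$ (with Massart's constant $C=2$). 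Your route replaces all of this machinery by the elementary two-sided inclusion $\{|\Qtha - \Qta| > \epsilon\} \subseteq \{\sup_y |\Gthetan(y) - \Gtheta(y)| \geq \underline{\epsilon}_{\bTheta}(\epsilon)\}$, proved directly from the definition of the generalized inverse, the monotonicity and right-continuity of $\Gthetan$, continuity of $\Gtheta$ (Assumption \ref{hyp:2:continuous_distribution_D}, needed in the lower-deviation case to get $\Gtheta(\Qta)=\alpha$), and the modulus-of-increase bound; plain DKW for the empirical CDF then finishes the argument with the same bound $2K_n\exp(-2n\,\underline{\epsilon}_{\bTheta}(\epsilon)^2)$. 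What your version buys is economy: no quantile-transform representation, no modulus-of-continuity bookkeeping, and only the standard CDF form of DKW; what the paper's version buys is a formulation in terms of the uniform quantile process that it reuses conceptually when discussing extensions to other functionals of $\Gtheta$ and $\Gtheta^{-1}$. Both establish the stated uniform convergence under the same hypotheses.
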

	
\begin{proof}[Proof of Proposition \ref{prop:appx:uniform_convergence_DKn}]
We first make the connection between the local continuity of the quantile function $\Gtheta^{-1}$ and the local increase of the distribution function $\Gtheta$. According to 
Assumption \ref{hyp:2:increasing_distribution_D}, we have that for any $\epsilon \in (0, \max((1-\alpha), \alpha))$, for any $\delta >0$ and for any $\btheta \in \cD$,
\begin{equation*}
	\label{eq:appx:equivalence_continuity_increase}
	(*)  : \left\{
	\begin{aligned}
		\Gtheta^{-1}(\alpha + \epsilon) - \Gtheta^{-1}(\alpha) < \delta \\
		\Gtheta^{-1} (\alpha) - \Gtheta^{-1}(\alpha - \epsilon) < \delta 
	\end{aligned}
	\right.
	\Longrightarrow 
	\left\{ 
	\begin{aligned}
		\Gtheta \left(\Gtheta^{-1}(\alpha + \varepsilon) \right)    < \Gtheta \left(\Gtheta^{-1}(\alpha) + \delta \right)    \\
		\Gtheta \left(\Gtheta^{-1}(\alpha) - \delta \right)    < \Gtheta \left(\Gtheta^{-1}(\alpha - \varepsilon ) \right)  
	\end{aligned}
	\right. .
\end{equation*}
Next, using basic properties of quantile functions (see for instance point {\bf ii} of Lemma 21.1 in \cite{van2000asymptotic} ) together with Assumption \ref{hyp:2:continuous_distribution_D}, we find that
$$
	\Gtheta \left(   \Gtheta^{-1}(\alpha + \epsilon) \right)  = \alpha  + \varepsilon = \Gtheta \left(   \Gtheta^{-1}(\alpha )\right)   + \epsilon
$$
and 
$$
	\Gtheta \left(   \Gtheta^{-1}(\alpha - \epsilon)\right)   = \alpha  - \varepsilon = \Gtheta \left(   \Gtheta^{-1}(\alpha )\right)   - \epsilon .
$$
Thus,
\begin{equation*}
	(*)   \Longrightarrow 
	\left\{ 
	\begin{aligned}
		\Gtheta \left( \Gtheta^{-1}(\alpha)  + \delta \right) -  \Gtheta \left( \Gtheta^{-1}(\alpha ) \right)   >  \epsilon \\
		\Gtheta \left( \Gtheta^{-1}(\alpha)   \right) -  \Gtheta \left( \Gtheta^{-1}(\alpha ) - \delta \right)   >  \epsilon
	\end{aligned}
	\right. .
\end{equation*}
We have shown that  any $\epsilon \in (0, \max((1-\alpha), \alpha))$, for any $\delta >0$ and for any $\btheta \in \cD$,
\begin{equation}
	\omega_{\Gtheta^{-1}}(\epsilon, \alpha)  > \delta 
	\Longrightarrow 
	 \epsilon_{\Gtheta} (\delta, \Gtheta^{-1}(\alpha)) < \epsilon.
	\label{eq:appx:equivalence_modulus_continuity_increase}
\end{equation}
We now prove the proposition. For any $n \geq 1$ and any $\epsilon > 0$, we have
\begin{align}
	\PP \left(\sup_{\btheta \in \cD_{K_n}} | \Qtha - \Qta | > \epsilon \right) 
	&= \PP \left(\bigcup_{\btheta \in \cD_{K_n}} \{| \Qtha - \Qta | > \epsilon \}\right) \nonumber\\
	&\leq \sum_{\btheta \in \cD_{K_n}} \Ptheta \left(| \Qtha - \Qta  | > \epsilon \right).
	\label{eq:appx:union_born_quantile}
\end{align}
Let $\xi_1, \dots,\xi_n$ be $n$ i.i.d. uniform random variables. The uniform empirical distribution function is defined by
$$
	\mathbb \UU (t) = \frac 1n \sum_{i=1}^n \mathbf{1}_{\xi_i \leq t} \quad \textrm{for }  0 \leq t \leq 1.
$$
The inverse uniform empirical distribution function is the function
$$
	\mathbb \UU^{-1}_n (u) =  \inf \{ t \, | \, \mathbb G _n (t) > u \} \quad \textrm{for }  0 \leq u \leq 1.
$$		
The empirical distribution function $\Gthetan$ can be rewritten as (see for instance \cite{van2000asymptotic}):
$$
	\Gthetan (y) \stackrel{\mathcal L}{=} \UU_n (\Gtheta (y))	 
$$
and as well for the quantile function,
\begin{equation*}
	\label{eq:appx:uniform_quantile_process}
	\Gthetan^{-1} (\alpha)  \stackrel{\mathcal L}{=}  \Gtheta^{-1} ( \UU_n^{-1}(\alpha)).
\end{equation*}
From Inequality \eqref{eq:appx:union_born_quantile}, we obtain
\begin{equation}
	\label{eq:appx:proba_bound_with_uniform_process}
	\sum_{\btheta  \in \cD_{K_n}} \Ptheta \left(| \Gthetan^{-1} (\alpha) - \Gtheta^{-1} (\alpha)  | > \epsilon \right) 
	= \sum_{\btheta  \in \cD_{K_n}} \Ptheta \left( | \Gtheta^{-1} ( \UU_n^{-1}(\alpha)) - \Gtheta^{-1}(\alpha) | > \epsilon \right)
\end{equation}
By definition of the local modulus of continuity $\omega_{\Gtheta^{-1}}$ of the quantile function $\Gtheta^{-1}$ at $\alpha$, we have
\begin{equation}
	\label{eq:appx:modulus_continuity_bound}
	| \Gtheta^{-1} ( \UU_n^{-1}(\alpha)) - \Gtheta^{-1}(\alpha) | 
	\leq
	\omega_{\Gtheta^{-1}} ( | \UU_n^{-1}(\alpha) - \alpha |, \alpha).
\end{equation}
Therefore, by replacing \eqref{eq:appx:modulus_continuity_bound} in \eqref{eq:appx:proba_bound_with_uniform_process} and using \eqref{eq:appx:equivalence_modulus_continuity_increase}, we obtain
\begin{align*}		
	\sum_{\btheta  \in \cD_{K_n}} \Ptheta \left( \left| \Gtheta^{-1} ( \UU_n^{-1}(\alpha)) - \Gtheta^{-1}(\alpha) \right| > \epsilon \right)
	&\leq 
	\sum_{\btheta  \in \cD_{K_n}} \Ptheta \left( \omega_{\Gtheta^{-1}} (|\UU_n^{-1}(\alpha) - \alpha|, \alpha) > \epsilon \right) \\
	&\leq
	\sum_{\btheta  \in \cD_{K_n}} \Ptheta \left( \epsilon_{\Gtheta}(\epsilon, \Gtheta^{-1} (\alpha) ) < | \UU_n^{-1} (\alpha) - \alpha| \right).
\end{align*}
Assumption \ref{hyp:2:increasing_distribution_D} then yields
\begin{equation}
	\label{eq:appx:proba_bound_min_modulus}
	\PP \left(\sup_{\btheta \in \cD_{K_n}} | \hat G_\theta ^{-1} (\alpha) - \Gtheta^{-1} (\alpha) | > \epsilon \right) 
	\leq
	K_n \PP \left( | \UU_n^{-1} (\alpha) - \alpha| > \underline{\epsilon}_{\cD} (\epsilon)  \right) .
\end{equation}
The DKW inequality \cite{Dvoretzky56} gives an upper bound of the probability of an uniform empirical process $\{|\UU_n(\alpha) - \alpha|\}$. As well for an uniform empirical quantile process $\{|\UU_n^{-1}(\alpha) - \alpha|\}$ (see for example Section 1.4.1 of \cite{Csorgo83}), such as $\forall \lambda > 0$:
$$
	\PP (\sup_{\alpha \in [0, 1]} |\UU_n^{-1}(\alpha) - \alpha|) \geq \lambda ) 
	\leq 
	C \exp (-2n \lambda^2) .
$$
Moreover, \cite{Massart90} proved that one can take $C=2$. Therefore, Equation \eqref{eq:appx:proba_bound_min_modulus} can be bounded using the DKW and 
$$
	\PP \left(\sup_{\btheta \in \cD_{K_n}} | \Qtha - \Gtheta^{-1} (\alpha) | > \epsilon \right) 
	\leq 
	2 K_n  \exp \left[-2n \underline{\epsilon}_{\cD}  ^2(\epsilon) \right]
	\xrightarrow{n \rightarrow \infty} 0 
$$
since $K_N \lesssim n^\beta$.
\end{proof}

A second requirement to get the consistency of the extremum estimator is the regularity of $\btheta \mapsto \Gtheta^{-1} (\alpha)$. This is shown in the next proposition.

\begin{prop}
	\label{prop:appx:continuity_quantile}
	Under Assumptions  \ref{hyp:2:continuity_copula}, \ref{hyp:2:continuous_distribution_D} and \ref{hyp:2:increasing_distribution_D}, the function
	\begin{align*}
		\cD &\longrightarrow \overline{Im (\eta)} \\
		\btheta &\longrightarrow \Gtheta^{-1} (\alpha)
	\end{align*}
	is continuous in $\btheta$ over $\cD$.
\end{prop}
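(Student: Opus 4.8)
\emph{Proof plan.} The plan is to establish sequential continuity of $\btheta \mapsto \Qta$ on $\cD$: since $\cD$ is a subset of $\RR^p$, hence a metric space, sequential continuity suffices. So I would fix $\btheta \in \cD$ and a sequence $(\btheta_n)_{n\geq 1}$ in $\cD$ with $\btheta_n \to \btheta$, and aim to show $G_{\btheta_n}^{-1}(\alpha) \to \Qta$, proceeding in three steps: (i) pointwise convergence of the input densities $f_{\btheta_n} \to \ftheta$, from Assumption~\ref{hyp:2:continuity_copula}; (ii) convergence of the output distribution functions $G_{\btheta_n} \to \Gtheta$, uniformly on $\RR$, deduced from (i) by Scheffé's lemma; (iii) convergence of the quantiles $G_{\btheta_n}^{-1}(\alpha) \to \Qta$, deduced from (ii) together with Assumptions~\ref{hyp:2:continuous_distribution_D} and \ref{hyp:2:increasing_distribution_D}.

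For step (i), I would invoke the copula factorization \eqref{eq:2:copula_density}: $\lambda$-almost everywhere, $f_{\btheta_n}(\bx) = c_{\btheta_n}\!\left(F_1(x_1),\dots,F_d(x_d)\right)\prod_{j=1}^d f_j(x_j)$, where the marginal objects do not depend on the parameter. By Assumption~\ref{hyp:2:continuity_copula} the map $(\btheta,\bu)\mapsto \ctheta(\bu)$ is continuous, so $f_{\btheta_n}(\bx) \to \ftheta(\bx)$ at every $\bx$ with $\prod_j f_j(x_j) > 0$, while at the remaining points both sides vanish $\lambda$-a.e.; hence $f_{\btheta_n} \to \ftheta$ $\lambda$-a.e. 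For step (ii), since the $f_{\btheta_n}$ and $\ftheta$ are probability densities with respect to $\lambda$, Scheffé's lemma gives $\int |f_{\btheta_n} - \ftheta|\, d\lambda \to 0$, and therefore, for every $y \in \RR$,
\[
	\left| G_{\btheta_n}(y) - \Gtheta(y) \right|
	= \left| \int_{\{\eta \leq y\}} \left( f_{\btheta_n} - \ftheta \right) d\lambda \right|
	\leq \int \left| f_{\btheta_n} - \ftheta \right| d\lambda \xrightarrow{n \to \infty} 0,
\]
so in fact $G_{\btheta_n} \to \Gtheta$ uniformly on $\RR$.

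For step (iii), write $q := \Qta$. By Assumption~\ref{hyp:2:continuous_distribution_D} (continuity of $\Gtheta$) we have $\Gtheta(q) = \alpha$, and by Assumption~\ref{hyp:2:increasing_distribution_D} (strict increase of $\Gtheta$) we have $\Gtheta(q-\varepsilon) < \alpha < \Gtheta(q+\varepsilon)$ for every $\varepsilon > 0$. Setting $\delta_\varepsilon := \min\!\left(\alpha - \Gtheta(q-\varepsilon),\ \Gtheta(q+\varepsilon) - \alpha\right) > 0$ and using the uniform convergence of step (ii), for $n$ large enough $\sup_y |G_{\btheta_n}(y) - \Gtheta(y)| < \delta_\varepsilon$, whence $G_{\btheta_n}(q-\varepsilon) < \alpha \leq G_{\btheta_n}(q+\varepsilon)$; by the definition \eqref{eq:2:quantile_function} of the quantile and the monotonicity of $G_{\btheta_n}$, this forces $q - \varepsilon \leq G_{\btheta_n}^{-1}(\alpha) \leq q + \varepsilon$. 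Letting $\varepsilon \downarrow 0$ gives $G_{\btheta_n}^{-1}(\alpha) \to q = \Qta$, which is the claimed continuity; the limit belongs to $\overline{Im(\eta)}$ because every quantile of the law of $Y = \eta(\bX)$ lies in the closure of its support.

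\textbf{Main obstacle.} The only genuinely nontrivial point will be step (ii): $\lambda$-a.e.\ convergence of the densities does not by itself transfer to the distribution functions without a uniform-integrability control. This is exactly what Scheffé's lemma resolves cleanly here, its hypotheses holding precisely because the $f_{\btheta_n}$ all integrate to $1$, so no explicit dominating function is needed. The rest is routine: step (i) amounts to continuity under the integral sign granted by Assumption~\ref{hyp:2:continuity_copula}, and step (iii) is a standard quantile-continuity argument (a version of Lemma~21.2 in \cite{van2000asymptotic}) under Assumptions~\ref{hyp:2:continuous_distribution_D}--\ref{hyp:2:increasing_distribution_D}.
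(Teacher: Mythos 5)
Your proof is correct, and its overall skeleton is the one the paper follows: continuity of the input density in $\btheta$ (Assumption \ref{hyp:2:continuity_copula}), transfer to the output CDF through the bound $\sup_{y}|G_{\btheta'}(y)-\Gtheta(y)|\leq\int|f_{\btheta'}-\ftheta|\,\mathrm{d}\lambda$, then transfer from the CDF to the quantile using Assumptions \ref{hyp:2:continuous_distribution_D} and \ref{hyp:2:increasing_distribution_D}. Two local steps differ, and both differences are defensible. First, to obtain the $L^1$ convergence of the densities the paper applies dominated convergence with the dominating function $\bar c\,\prod_{j} f_j$, where the constant $\bar c$ comes from the compactness of $\bTheta\times[0,1]^d$ and the joint continuity of the copula density; you instead invoke Scheff\'e's lemma, which needs only pointwise convergence of densities that all integrate to one, so you bypass the uniform bound on $\ctheta$ and this particular use of compactness — a slightly more economical argument with the same conclusion. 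Second, for the final step the paper writes $\Gtheta^{-1}(\alpha)-\Gthetah^{-1}(\alpha)=\Gtheta^{-1}\bigl(\Gthetah(y_h)\bigr)-\Gtheta^{-1}\bigl(\Gtheta(y_h)\bigr)$ with $y_h=\Gthetah^{-1}(\alpha)$ and concludes from the continuity of $\Gtheta^{-1}$ in a neighborhood of $\alpha$, whereas you run the standard sandwich argument: $\Gtheta(q-\varepsilon)<\alpha<\Gtheta(q+\varepsilon)$ together with uniform convergence of the CDFs forces $|G_{\btheta_n}^{-1}(\alpha)-q|\leq\varepsilon$ (the quantile-convergence lemma of \cite{van2000asymptotic}). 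These are interchangeable, and your version makes explicit exactly where each assumption enters (Assumption \ref{hyp:2:continuous_distribution_D} gives $\Gtheta(q)=\alpha$, Assumption \ref{hyp:2:increasing_distribution_D} gives the strict inequalities at $q\pm\varepsilon$). I see no gaps.
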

			
\begin{proof}[Proof of Proposition \ref{prop:appx:continuity_quantile}]
According to Assumption \ref{hyp:2:continuity_copula}, for any $\btheta \in \bTheta$, the distribution  $\Ptheta$ admits a density function $\ftheta$ with respect to the Lebesgue measure on $\RR^d$ such that 
$$ 
	\ftheta (x_1, \dots,x_d) = \ctheta \left(  F_1(x_1, \dots, x_d ) \right)  f_1 (x_1 )\dots f_d (x_d), 
$$ 
where $f_j$ is the marginal density function of $X_j$, for $j=1, \dots, d$ and the Lebesgue measure on $\RR$. Moreover, for any $\bx \in \RR^d$, the function $\btheta \rightarrow \ftheta (x)$ is continuous in $\btheta$ over $\cD$. 

The domain $\cD \times [0, 1]^p$ is a compact set and according to Assumption \ref{hyp:2:continuity_copula}, there exists a constant $\bar c$ such that $\forall (\btheta, \bu) \in   \cD \times [0, 1]^d$, $\ctheta(\bu) \leq \bar c$. Consequently, we have
\begin{equation}
	\label{eq:appx:majoration_density}
	|\ftheta(x_1, \dots, x_d)| \leq \bar c \prod_{i=1}^d f_i (x_i) .
\end{equation}
For $\btheta \in \cD$ and for any $h >0$, we denote $y_h = \Gthetah^{-1} (\alpha)$. According to Assumption \ref{hyp:2:continuous_distribution_D} we have $\alpha = \Gthetah(y_h)$ and thus,
\begin{align}			
	\Gtheta^{-1} (\alpha) - \Gthetah^{-1} (\alpha) &= \Gtheta^{-1} (\alpha) - y_h  \notag \\
	&= 
	\Gtheta^{-1}(\Gthetah (y_h)) - \Gtheta^{-1} (\Gtheta(y_h))
	\label{eq:appx:GthetaGintheta}
\end{align}
Now, using Assumption \ref{hyp:2:increasing_distribution_D}, we have that $\Gtheta$ is strictly increasing in the neighborhood of $\Gtheta^{-1}(\alpha)$ and thus $\Gtheta^{-1}$ is continuous in the neighborhood of $\alpha$. Note that 
\begin{align*}
	|\Gthetah (y_h)  - \Gtheta (y_h) | 
	&= \left| \int_{\RR^d}     \mathbb 1_{\eta (\bx) \leq y_h} \mathrm d \Fthetah(\bx) - \int_{\RR^d}  \mathbb 1_{\eta (\bx) \leq y_h} \mathrm d F_{\theta}(\bx)   \right| \\
	&= \left|  \int_{\RR^d}  \left[   \fthetah (\bx) - \ftheta (\bx) \right]  \mathbb 1_{\eta (\bx) \leq y_h} \mathrm d \lambda(\bx) \right|  \\
	&\leq
	\int_{\RR^d} |\fthetah (\bx) - \ftheta (\bx)| \mathrm d \lambda(\bx) 	
\end{align*}
We then apply a standard dominated convergence theorem using \eqref{eq:appx:majoration_density} to get that
$$
\Gthetah(y_h) - \Gtheta (y_h) \xrightarrow{h \rightarrow 0} 0.
$$
This, with \eqref{eq:appx:GthetaGintheta} and with the continuity of $\btheta \mapsto \Gtheta$,  shows that
$$
	\Gtheta^{-1} (\alpha) - \Gthetah^{-1} (\alpha) \xrightarrow{h\rightarrow 0} 0.
$$
\end{proof}

We are now in position to prove Theorem~\ref{theor:2:consistency_extremum_estimator}.

\begin{proof}[Proof of Theorem \ref{theor:2:consistency_extremum_estimator}]
Under Assumptions \ref{hyp:2:continuous_distribution_D} and \ref{hyp:2:increasing_distribution_D}, Proposition~\ref{prop:appx:uniform_convergence_DKn} directly gives that for any $\varepsilon >0$,
$$
	P \left(   \left| \inf_{\theta \in \cD_{K_n}}  \Qtha -   \inf_{\theta \in \cD_{K_n}} \Qta  \right| > \varepsilon \right)  \xrightarrow{n \rightarrow \infty} 0
$$
which means that
\begin{equation}
	\label{eq:appx:cv1} 
	P \left(   \left| \widehat G_{\hat \btheta} ^{-1} (\alpha) -   \inf_{\btheta \in \cD_{K_n}} \Qta  \right| > \varepsilon \right)  \xrightarrow{n \rightarrow \infty} 0  .
\end{equation}
If Assumption~\ref{hyp:2:continuity_copula} is also satisfied, Proposition~\ref{prop:appx:continuity_quantile} together with \eqref{eq:2:approx_grid} give that $\inf_{\btheta \in \cD_{K_n}} \Qta$ tends to $\inf_{\theta \in \cD} \Qta $ as $n$ tends to infinity. Thus
\begin{equation}
	\label{eq:appx:cv2} 
	\inf_{\btheta \in \cD_{K_n}} G_{\theta}^{-1} (\alpha)  \xrightarrow{n \rightarrow \infty}  {G_C^{-1}}^\star (\alpha)  = G_{\theta^*_C}^{-1} (\alpha) 
\end{equation}
We then derive \eqref{eq:2:cv_quantile} from \eqref{eq:appx:cv1} and \eqref{eq:appx:cv2}.

We now assume that Assumption~\ref{hyp:2:min_unicity_D} is also satisfied. Let $\btheta^*$ be the unique minimizer of $\btheta \mapsto \Qta$.  Let $ h >0$ such that 
$ {B(\btheta^*, h )}^c :=  \{\btheta \in \cD  \: : \: \| \btheta - \btheta^\star \|_2\geq \varepsilon \} $ is not empty. According to Proposition \ref{prop:appx:continuity_quantile} and using the fact that $\cD$ is compact, we have
\begin{equation}
	\label{eq:appx:caracmin} 
	\sup_{ \btheta \in  {B(\theta^*, h )}^c}   |\Gtheta^{-1} (\alpha) - G_{\btheta^*}^{-1} (\alpha)| > 0 .
\end{equation}	
Consequently, for any $\forall h > 0$ small enough, there exists $\epsilon > 0$ such that
\begin{equation}
	\label{eq:appx:distance_from_minimum}
	|\Gtheta^{-1} (\alpha) - G_{\theta^*}^{-1} (\alpha)| \leq \epsilon \Longrightarrow | \btheta - \theta^*| < h
\end{equation}
Let $h > 0$ and take $\epsilon$ such that \eqref{eq:appx:distance_from_minimum} is satisfied for $h$. According to Proposition~\ref{prop:appx:uniform_convergence_DKn}, $\widehat G_{\hat \theta} ^{-1} (\alpha)  - G_{\hat \theta} ^{-1} (\alpha)  $ tends to zero in probability as $n$ tends to infinity. This, with \eqref{eq:2:cv_quantile}, shows that
$$
	P \left(   \left|G_{\hat \theta} ^{-1} (\alpha) -    G_{\theta^*}^{-1} (\alpha)  \right| > \varepsilon \right)  \xrightarrow{n \rightarrow \infty} 0 .
$$
We conclude using \eqref{eq:appx:distance_from_minimum}. 
\end{proof}

\section{Vine copulas}
\label{sec:vine_copulas}

\subsection{Definition}
\label{subsec:appx:definition_vines}

A vine model describes a $d$-dimensional pair-copula construction (PCC) and is a sequence of linked trees where the nodes and edges correspond to the $d(d-1)/2$ pair-copulas. According to Definition \ref{def:appx:r_vine} from \cite{bedford2001probability}, a vine structure is composed of $d-1$ trees $T_1, \dots, T_{d-1}$ with several conditions.

\begin{mydef}[R-vine]
	The sequence $\cV = (T_1, \dots, T_{d-1})$ is an R-vine on $n$ elements if 
	\begin{enumerate}
	\item $T_1$ is a tree with nodes $N_1 = \{ 1, \dots, d \}$ and a set of edges denoted $E_1$.
	\item For $i=2, \dots, d-1$, $T_i$ is a tree with nodes $N_i = E_{i-1}$ and edges set $E_i$.
	\item For $i = , \dots, d-1$ and $\{a, b \} \in E_i$ with $a = \{ a_1, a_2\}$ and $b = \{ b_1, b_2\}$ it must hold that $\#(a \cap b) = 1$ (proximity condition).
	\end{enumerate}
	\label{def:appx:r_vine}
\end{mydef}

Each tree $T_i$ is composed of $d-i+1$ nodes which are linked by $d-i$ edges for $i = 1, \dots, d-1$. A node in a tree $T_i$ must be an edge in the tree $T_{i-1}$, for $i=2, \dots, d-1$. Two nodes in a tree $T_i$ can be joined if their respective edges in tree $T_{i-1}$ share a common node, for $i=2, \dots, d-1$. The proximity condition, suggests that two nodes connected by an edge should share one variable from the conditioned set. The \textit{conditioning set} and \textit{conditioned set} are defined in Definition \ref{def:appx:conditioning_conditioned_sets} along with the \textit{complete union}. The complete union of an edge $e$ is a set of all unique variables contained in $e$.

\begin{mydef}[Complete union, conditioning and conditioned sets of an edge]
	Let $A_e$ be the complete union of an edge $e = \{a, b\} \in E_k$ in a tree $T_k$ of a regular vine $\cV$,
	$$
		A_e = \{ v \in N_1 | \exists e_i \in E_i, i = 1, \dots, k-1, \text{such that } v\in e_i \in \dots \in e_{k-1} \in e\}.
	$$
	The conditioning set associated with edge $e= \{ a, b\}$ is $D(e) := A_a \cap A_b$ and the conditioned sets associated with edge $e$ are $i(e) := A_a \backslash D(e)$ and $j(e) := A_b \backslash D(e)$. Here, $A \backslash B := A \cap B^c$ and $B^c$ is the complement of $B$.
	\label{def:appx:conditioning_conditioned_sets}
\end{mydef}

The conditioned and conditioning sets of an edge $e = \{a, b\}$ are respectively the symmetric difference and the intersection of the complete unions of $a$ and $b$. The conditioned and conditioning sets of all edges of $\cV$ are collected in a set called \textit{constraint set}. Each element of this set is composed of a pair of indices corresponding to the conditioned set and a set containing indices corresponding to the conditioning set, as shown in Definition \ref{def:appx:constraint_set}.
\begin{mydef}[Constraint set]
	The constrain set for $\cV$ is a set:
	$$
	\mathcal{C V} = \{ (\{ i(e), j(e) \}, D_e) | e \in E_i, e=\{ a, b\}, i=1, \dots, d-1\}
	$$
	\label{def:appx:constraint_set}
\end{mydef}
The pair-copula in the first tree characterize pairwise unconditional dependencies, while the pair-copula in higher order trees model the conditional dependency between two variables given a set of variables. The number of conditioning variables grows with the tree order. Note that a PCC where all trees have a path-like structure define the D-vine subclass while the star-like structures correspond to C-vine subclass. All other vine structures are called regular vines (R-vines) \cite{bedford2001probability}.

We illustrate the concept of a vine model with a $d=5$ dimensional example. For clarity reasons, we use the same simplifications as in Section \ref{subsec:4:estimation_fixed} which consider for instance $f_{1} = f_{1}(x_1)$, $f_{2} = f_{2}(x_2)$ and so on for higher order and conditioning. One possible PCC can be written for this 5-dimensional configuration:
\begin{align}
	f(x_1, x_2, x_3, x_4, x_5) 	&= f_1 \cdot f_2 \cdot f_3 \cdot f_4 \cdot f_5\ \text{\scriptsize{(margins)}} \nonumber\\
						\text{\scriptsize{(unconditional pairs)}}
						& \times
							c_{12} \cdot 
							c_{35} \cdot 
							c_{34} \cdot
							c_{24}
						 \nonumber \\
						\text{\scriptsize{(1st conditional pair)}} 
						& \times
							c_{14 | 3} \cdot 
							c_{23 | 4} \cdot 
							c_{45 | 3} 
						\nonumber \\
						\text{\scriptsize{(2nd conditional pair)}}
						& \times
							c_{15 | 34} \cdot
							c_{25 | 34} 
						\nonumber \\
						\text{\scriptsize{(3rd conditional pair)}}
						& \times
						c_{12 | 345} .
	\label{eq:appx:PCC_5d}
\end{align}
The vine structure associated to \eqref{eq:appx:PCC_5d} is illustrated in Figure \ref{fig:appx:R_vine_example_5d}. This graphical model considerably simplify the understanding and we observe that this model is a R-vine because there is no specific constraints on the trees.
\begin{figure}
	\centering
	\includegraphics[width=0.36\textwidth]{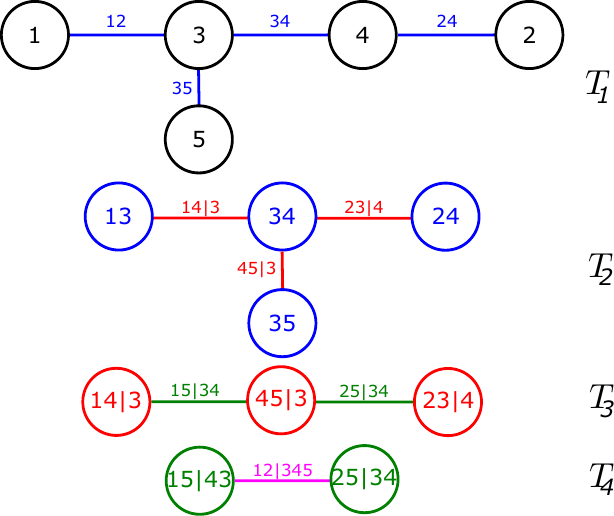}
	\caption{R-vine structure for $d=5$.}
	\label{fig:appx:R_vine_example_5d}
\end{figure}

A re-labeling of the variables can lead to a large number of different PCC. \cite{morales2011counting} calculated the number of possible vine structures with the dimension $d$ and shows that it becomes extremely large for high dimension problems. We illustrate below, using the same $d=5$ dimensional example, two other PCC densities:
\begin{minipage}{0.49\textwidth}
\begin{align}
	f_D &= f_1 \cdot f_2 \cdot f_3 \cdot f_4 \cdot f_5 \nonumber\\
						& \times
							c_{12} \cdot 
							c_{23} \cdot 
							c_{34} \cdot
							c_{45}
						 \nonumber \\
						& \times
							c_{13 | 2} \cdot 
							c_{24 | 3} \cdot 
							c_{35 | 4} 
						\nonumber \\
						& \times
							c_{14 | 23} \cdot
							c_{25 | 34} 
						\nonumber \\
						& \times
						c_{15 | 234} 
	\label{eq:appx:PCC_5d_d_vine}
\end{align}
\end{minipage}
\begin{minipage}{0.49\textwidth}
\begin{align}
	f_C	&= f_1 \cdot f_2 \cdot f_3 \cdot f_4 \cdot f_5 \nonumber\\
						& \times
							c_{12} \cdot 
							c_{13} \cdot 
							c_{14} \cdot
							c_{15}
						 \nonumber \\
						& \times
							c_{23 | 1} \cdot 
							c_{24 | 1} \cdot 
							c_{25 | 1} 
						\nonumber \\
						& \times
							c_{34 | 12} \cdot
							c_{35 | 12} 
						\nonumber \\
						& \times
						c_{45 | 123} 
	\label{eq:appx:PCC_5d_c_vine}
\end{align}
\vspace{0.1em}
\end{minipage}
where \eqref{eq:appx:PCC_5d_d_vine} and \eqref{eq:appx:PCC_5d_c_vine} respectively correspond to D-vine and C-vine structures and are represented in Figures \ref{fig:appx:d_vine_example_5d} and \ref{fig:appx:c_vine_example_5d}. As we can see in these examples, the D-vine have a constraint on each tree that gives a path-like arrangement of the nodes. The C-vine on the other hand only has one node connected to all others for each tree.
\begin{figure}
	\centering
	\subfloat[D-vine structure for $d=5$.]{
		\centering
		\includegraphics[width=.5\textwidth]{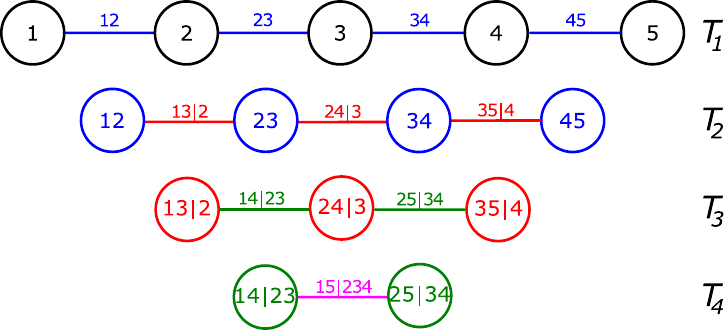}
		\label{fig:appx:d_vine_example_5d}
	}
	\subfloat[C-vine structure for $d=5$.]{
		\centering
		\includegraphics[width=0.5\textwidth]{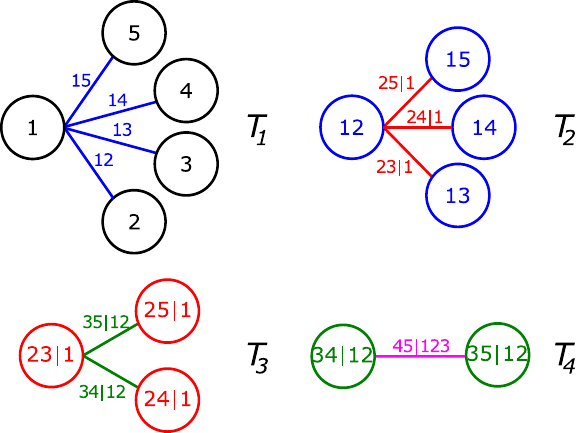}
		\label{fig:appx:c_vine_example_5d}
	}
	\caption{D-vine and C-vine structure for $d=5$.}
\end{figure}

An efficient way to store the information of a vine structure is proposed in \cite{morales2010bayesian} and is called a R-vine array. The approach uses the specification of a lower triangular matrix where the entries belong to $1, \dots, d$. Such matrix representation allows to directly derive the tree structure (or equivalently the associated PCC distribution). For more details, see \cite{morales2010bayesian}.

\subsection{Generating R-vine from an indexed list of pairs}
\label{subsec:appx:vine_construction}

The iterative procedure proposed in Section \ref{subsec:4:estimation_iterative}, described by Algorithm \ref{algo:4:iterative_algorithm}, minimizes the output quantile by iteratively determining the pairs of variables that influences the most the quantile minimization. At each iteration of the algorithm (step 1.a), a new vine structure is created by considering the list of influential pairs. The specificity of this vine creation is to consider the ranking of the list by placing the most influential pairs in the first trees of the R-vine. Thus, we describe in this section how to generate vine structure with the constraint of a given list of indexed pairs to fill in the structure. 

\subsubsection{The algorithm}

We consider the same notation as in Algorithm \ref{algo:4:iterative_algorithm}. Creating a vine structure from a given indexed list of pairs $\Omega_k$ is not straightforward. The difficulties come from respecting the ranking of $\Omega_k$ and the respect of the R-vine conditions. Indeed, the pairs cannot be append in the structure easily. The vine structure must respect these conditions, which can be sometime very restrictive. The procedure we proposed is detailed by the pseudo-code of Algorithm \ref{algo:appx:vine_construction} and can be greatly simplified in these few key steps:
\begin{itemize}
\item[1.] fill $\cV$ with the list $\Omega_k$,
\item[2.] fill $\cV$ with a permutation of $\Omega_{-k}$,
\item[3.] if $\cV$ is not a R-vine, then permute $\Omega_k$ and restart at step 1.
\end{itemize}
In step 1 and 2, the \textit{filling} procedure, detailed in Algorithm \ref{algo:appx:filling_vine_structure}, successively \textit{adds} the pairs of a list in the trees of a vine structure. Adding a pair $(i, j)$ in a tree $T_l$ associates $(i, j)$ with the conditioned set and determine a possible conditioning set $D$ from the previous tree such as a possible edge is $i, j | D$.

In step 2, because the ordering of $\Omega_{-k}$ is not important in the filling of $\cV$, the permutation of $\Omega_{-k}$ aims at finding a ranking such as $\cV$ leads to a R-vine.

In step 3, when the previous step did not succeeded and the resulting $\cV$ is not a R-vine structure, then the ranking of $\Omega_k$ is not possible and must be changed. The permutation of some elements of $\Omega_k$ must be done such as the ranking of the most influential pairs remains as close as possible to the initial one.

\SetKwFunction{FFill}{Fill}
\SetKwFunction{FCheck}{Check}
\SetKwFunction{FAdd}{Add}
\SetKwFunction{FFind}{FindConditioningSet}
\SetKwFunction{FFill}{Fill}

\begin{algorithm}[h]
	\caption{Generating a vine structure from a given list of indexed pairs $\Omega_k$}
	\label{algo:appx:vine_construction}

	\KwData{$\Omega_k$, $d$}
	\KwResult{A vine structure $\cV$.}

	$\Omega_k^{init} = \Omega_k$\;
	$k = 1$\;
	\Do{$\cV$ is not a R-vine}{
		\tcc{initialize $\cV$ with a first empty tree}
		$N_1 = (1, \dots, d)$\;
		$E_1 = ()$\;
		$\cV = ((N_1, E_1))$\;

		\tcc{filling $\cV$ with the list of selected pairs $\Omega_k$}
		$\cV = \FFill(\cV,\ \Omega_k,\ d)$\tcp*{See Algorithm \ref{algo:appx:filling_vine_structure}}

		\tcc{determining a permutation of $\Omega_{-k}$ that fills $\cV$}
		\For{$\Omega_{-k}^\pi \in \pi(\Omega_{-k})$}{
			\tcc{filling $\cV$ with the candidate pairs $\Omega_{-k}^\pi$}
			$\cV_\pi = \FFill(\cV,\ \Omega_{-k}^\pi,\ d)$\tcp*{See Algorithm \ref{algo:appx:filling_vine_structure}}
			\If{$\cV_\pi$ is a R-vine}{
				\tcc{a permutation worked $\rightarrow$ we quit the loop}
				\textbf{break}
			}
		}

		$\cV = \cV_\pi$\;
		\If{$\cV$ is not a R-vine}{
			\tcc{filling did not work $\rightarrow$ permute initial list $\Omega_k^{init}$}
			Get $\Omega_k$ by inverting pairs of $(\Omega_k^{init}$\;
			$k = k + 1$\;
		}
	}
\end{algorithm}

\subsubsection{Example}

For illustration, let's create a $d=5$ dimensional vine structure with the given list of pairs $\Omega_k = ((1, 2), (1, 3), (2, 3), (4, 5), (2, 4), (1, 5))$ using Algorithm \ref{algo:appx:vine_construction}. Using the original list $\Omega_k$, the $\FFill$ function may fail at line 7 of Algorithm \ref{algo:appx:vine_construction}, and more precisely, at line 15 of Algorithm \ref{algo:appx:filling_vine_structure}. Indeed, the first tree of $\cV$ does not validate the R-vine conditions. The tree is illustrated in Figure \ref{fig:appx:unique_graph} and as we can see, the nodes are not all connected into one single tree. Therefore, we permuted the list $\Omega_k$ by exchanging the pairs $(2, 4)$ and $(4, 5)$, as shown in Figure \ref{fig:appx:permutation_list_example}. This permutation now leads to a vine structure that respects the new ranked list $\Omega_k = ((1, 2), (1, 3), (2, 3), (2, 4), (4, 5), (1, 5))$ .
\begin{figure}
	\centering
	\includegraphics[width=0.4\textwidth]{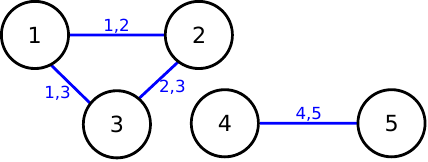}
	\caption{Example: first tree of a non valid vine structure for $d=5$ that does lead to a single connected tree.}
	\label{fig:appx:unique_graph}
\end{figure}
\begin{figure}
	\centering
	\includegraphics[width=0.4\textwidth]{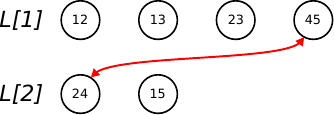}
	\caption{Example: exchange of elements of $\Omega_k$ in order to lead to a valid vine structure.}
	\label{fig:appx:permutation_list_example}
\end{figure}

\begin{algorithm}
	\caption{Filling a vine structure with a given list}
	\label{algo:appx:filling_vine_structure}

	\Fn{\FFill{$\cV$, $\Omega_k$, $d$}}{
		\tcc{
			$\cV$: an incomplete vine structure ,\\
			$\Omega_k$:, a list of indexed pairs\\
			$d$: the input dimension.
		}
		$l = |\cV|$\tcp*{number of existing trees}
		$(T_1, \dots, T_l) = \cV$\;
		$k = |T_l|$\tcp*{number of existing nodes in last tree}

		\tcc{loop over the list of pairs}
		\For{$(i, j) \in \Omega_k$}{
			$D = \emptyset$\;
			\If{$l >= 2$}{
				\tcc{conditioning set is only computed from $T_2$}
				$D = \FFind((i, j),\ N_{l-1})$\tcp*{See Algorithm \ref{algo:appx:find_conditioning_set}}
				\If {$D = \emptyset$}{
					\tcc{no conditioning set found $\rightarrow$ not possible}
					\KwRet False
				}
			}
			$E_l = E_l \cup i, j | D$\tcp*{add new edge in $E_l$}
			$T_l = (N_l, E_l)$\tcp*{update current tree}
			$\cV = (T_1, \dots, T_l)$\;

			\If{$k \geq d-l$}{
			\tcc{if tree $T_l$ is complete}
				\If{$\cV$ does not fulfill the R-vine conditions}{
					\tcc{the vine structure $\cV$ is not valid}
					\KwRet False
				}
				$k = 1$\;
				$l = l + 1$\;
				$N_l = E_{l-1}$\tcp*{nodes of next tree are the edges of previous tree}
			}
			\Else{
				$k = k + 1$\;
			}
		}

	\KwRet $\cV$
	}

\end{algorithm}

\begin{algorithm}
	\caption{Gets the conditioning set of a given conditioned set}
	\label{algo:appx:find_conditioning_set}
	\Fn{\FFind{$(i, j)$, $N_{-}$}}{
		\tcc{
			$(i, j)$: the conditioned set,\\
			$N_{-}$: list of nodes from the previous tree.
			}
		$D = \emptyset$\;
		\For{$a, b \in N_{-}$, with $a \neq b$}{
			\If{$i \in a$ and $j \in b$}{
				\If{$j \notin A_a$ and $i \notin A_b$}{
					\tcc{See Definition \ref{def:appx:conditioning_conditioned_sets}}
					$D = A_a \cap A_b$\;
					\textbf{break}\;
				}
			}
		}
		\KwRet $D$
	}
\end{algorithm}

\end{document}